\newtheorem{proposition}{Proposition}
\newtheorem{lemma}{Lemma}
\newtheorem{corollary}{Corollary}
\newtheorem{prop}{Proposition}[section]
\newtheorem{cor}{Corollary}[section]
\newtheorem{lm}{Lemma}[section]
\newtheorem{thm}{Theorem}[section]
\newcommand{\bthm}{\begin{thm}}
\newcommand{\ethm}{\end{thm}}
\newcommand{\bcor}{\begin{cor}}
\newcommand{\ecor}{\end{cor}}
\newcommand{\bprop}{\begin{prop}}
\newcommand{\eprop}{\end{prop}}
\newcommand{\blm}{\begin{lm}}
\newcommand{\elm}{\end{lm}}
\newcommand{\beq}{\begin{equation}}
\newcommand{\eeq}{\end{equation}}
\newcommand{\ber}{\begin{eqnarray}}
\newcommand{\eer}{\end{eqnarray}}
\newenvironment{proof1}{\begin{trivlist}\item[]{\bf Proof:\hspace{2mm}}}{\hfill$\blackbox$\end{trivlist}}
\newcommand{\blackbox}{\vrule height7pt width5pt depth1pt}
\newcommand{\bit}{\begin{itemize}}
\newcommand{\eit}{\end{itemize}}
\newcommand{\ben}{\begin{enumerate}}
\newcommand{\een}{\end{enumerate}}
\newcommand{\bdesc}{\begin{description}}
\newcommand{\edesc}{\end{description}}
\newcommand{\beqarrn}{\begin{eqnarray*}}
\newcommand{\eeqarrn}{\end{eqnarray*}}
\newenvironment{proofof}[1]{\begin{trivlist}\item[]{\bf Proof of #1:\hspace{2mm}
}}{\hfill\blackbox\end{trivlist}}
\newcommand{\bproofof}{\begin{proofof}}
\newcommand{\eproofof}{\end{proofof}}
\newenvironment{rem}{\begin{trivlist}\item[]{\bf
Remark:}\hspace{4mm}}{\end{trivlist}}
\newcommand{\brem}{\begin{rem}}
\newcommand{\erem}{\end{rem}}
\newenvironment{rems}{\begin{trivlist}\item[]{\bf
Remarks}\begin{itemize}}{\end{itemize}\end{trivlist}}
\newcommand{\brems}{\begin{rems}}
\newcommand{\erems}{\end{rems}}
\newtheorem{fact}{Fact}
\newcommand{\bfact}{\begin{fact}}
\newcommand{\efact}{\end{fact}}
\newtheorem{examp}{Example}
\newcommand{\bexamp}{\begin{examp}\rm}
\newcommand{\eexamp}{\end{examp}}
\newtheorem{defn}{Definition}[section]
\newcommand{\bdefn}{\begin{defn}\rm}
\newcommand{\edefn}{\end{defn}}
\newtheorem{prob}{Problem}
\newcommand{\bprob}{\begin{prob}}
\newcommand{\eprob}{\end{prob}}
\newcommand{\bvtm}{\begin{verbatim}}
\newcommand{\bfig}{\begin{figure}}
\newcommand{\efig}{\end{figure}}
\newcommand{\bcen}{\begin{center}}
\newcommand{\ecen}{\end{center}}
\long\def\comment#1{}
\def \n2{{N_0 \over 2}}
\def \h5{\hspace{0.5in}}
\begin{document}

\title {Exploiting Channel Correlation and PU Traffic Memory for Opportunistic Spectrum Scheduling}

\author {
{Shanshan Wang,
Sugumar Murugesan
and Junshan Zhang}
\\
\centering{School of Electrical, Computer and Energy Engineering,
Arizona State University, Tempe, AZ, 85287}}

\maketitle

\thispagestyle{plain}
\pagestyle{plain}

\begin{abstract}
We consider a cognitive radio network with multiple primary users (PUs) and one secondary user (SU), where
a spectrum server is utilized for spectrum sensing and scheduling the SU to transmit over one of the PU channels opportunistically.
One practical yet challenging scenario
is when \textit{both} the PU occupancy and the channel fading vary over time and exhibit temporal correlations.
Little work has been done for exploiting  such temporal memory in  the channel fading and the PU occupancy simultaneously for opportunistic spectrum scheduling.
A main goal of this work is to understand the intricate tradeoffs resulting from the interactions of the two sets of system states - the channel fading and the PU occupancy,
by casting the problem as a partially observable Markov decision process.
We first show that a simple greedy policy is optimal in some special cases. To build a clear understanding of the tradeoffs,
we then introduce a full-observation genie-aided system, where the spectrum server collects channel fading states from all PU channels.  The genie-aided system is used to decompose the tradeoffs in the original system into multiple tiers, which are examined progressively.
Numerical examples indicate
that the optimal scheduler in the original system, with observation on the scheduled channel only, achieves a performance very close to the genie-aided system.
Further, as expected, the optimal policy in the original system significantly outperforms randomized scheduling, pointing to the merit of exploiting the temporal correlation structure in both channel fading and PU occupancy.
\end{abstract}

\begin{keywords}
Cognitive radio networks; spectrum server; temporal correlation;
partially observable Markov decision processes; genie-aided system.
\end{keywords}

\section{Introduction}

Over the past decade, cognitive radio (CR) has been identified as one promising solution to ease the ``spectrum scarcity'' associated with the traditional \textit{static} spectrum allocation \cite{Survey:Qing1, NeXt, Fette06, thesis:CR}. Going beyond the fixed and licensed spectrum allocation, a secondary user (SU) can opportunistically access the spectrum owned by the primary users (PUs) in a CR network. This paradigm shift from static to dynamic spectrum allocation has been shown to bring significant improvement in the spectrum utilization, and hence the system's overall performance.

A fundamental principle enabling the cognitive capability is built upon the SU's dynamic adaptation of its operation parameters (such as power, frequency, etc.), according to the environmental variations over time. One such variation  is the channel fading. Often times an \textit{i.i.d.} flat fading model is used in abstracting fading channels, which fails to capture the temporal channel memory observed in realistic scenarios~\cite{fading}. An alternative model, namely the
Gilbert-Elliot (GE) model~\cite{Gilbert}, has been widely used (see, e.g., \cite{myopic:Qing, sugumar, KKar_infocom07}) to capture the temporal correlation in the fading process.
Specifically, the GE model uses a first-order Markov chain with two states: one representing a ``good'' channel where the user experiences error-free transmissions, and the other representing a ``bad'' channel with unsuccessful transmissions.

Another variation to be considered is the PU's activity on the channels.
Note that in a CR network,
the SUs have a strictly lower priority in the spectrum usage, and can only access the channels when the PUs are absent~\cite{Survey:Qing1, NeXt, Fette06, thesis:CR}. This unique spectrum usage structure necessitates the inclusion of the channel's PU occupancy state in determining the channel's accessability by an SU.

In many of the existing works (see, e.g., \cite{myopic:Qing, sugumar, LAJohnston_TW06, QZhao_selfsimilar, ASwami09}), only one set of the system states -- either the channel fading, or the PU occupancy
-- has been taken into consideration in developing spectrum access strategies by the SU. In this work, we take a step forward, and
explore the utility of \textit{both} the states for opportunistic channel access. Specifically, we consider a CR network consisting of multiple PU channels and one SU opportunistically accessing one of the PU channels at a time. A spectrum server is utilized to periodically schedule the SU to one of the channels for transmission.
Worth noting is that the usage of the spectrum server is consistent with the recent FCC ruling on the use of a spectrum database in CR network operations~\cite{FCC_database}.
Further, the spectrum server facilitates spectrum management, and enhances the scalability of the network~\cite{Akyildiz_survey}.

Dynamic spectrum access in the presence of the temporal variations
can be cast as a sequential control problem.
We formulate this sequential control problem as a partially observable Markov decision process~\cite{POMDP}.
In this context, the spectrum server makes scheduling decisions in terms of allocating a PU channel to the SU, based on the channel's PU occupancy state and
fading state.
We model the channel fading by using a two-state first-order Markov chain, i.e., the GE model. On the other hand,
since the PU activity may possess a long temporal memory (see, e.g., \cite{onoffmodel2, DWillkomm_Dyspan08}),
we develop an ``age'' model to capture the temporal correlation structure of the PU occupancy state.

Building on the above model, we examine the intricate tradeoffs resulting from the dynamic interaction of the system states.
Our main contributions can be summarized as follows:
\begin{itemize}
  \item We study opportunistic spectrum scheduling by exploiting the temporal correlation structure
   in \textit{both} the channel fading and the PU occupancy states. This, to the best of our knowledge, has not been addressed systematically in the literature so far.

  \item We show that the optimal scheduling involves a \textit{multi-tier} ``exploitation vs. exploration'' tradeoff. For certain special cases,
  we establish the optimality of a simple greedy policy, and examine the intricacy of the fundamental tradeoffs.

  \item To gain a better understanding of the tradeoffs for the general case, we introduce a full-observation genie-aided system, where the spectrum server collects channel fading states from all the PU channels. Using the genie-aided system, we decompose the multiple tiers of the tradeoffs, and examine them progressively.
\end{itemize}

The rest of the paper is organized as follows. Section~\ref{sec:probform} introduces the basic setting and problem formulation in detail. In Section~\ref{sec:tradeoffs}, we identify the fundamental tradeoffs and illustrate them via special cases. Section~\ref{sec:tradeoff_ga} further examines the tradeoffs by developing a genie-aided system that isolates the impact of channel fading and PU occupancy on the optimal reward. In Section~\ref{sec:numresult}, numerical results are presented where we evaluate and compare the performance of the optimal policy in the original system with baseline cases. We also study the impact of the memory in the channel fading and PU occupancy on the relative performances of various baseline cases. This is followed by concluding remarks in Section~\ref{sec:conclusion}.

\section{Problem Formulation} \label{sec:probform}

\subsection{Basic Setting} \label{subsec:sysmodel}
We consider a CR network with one SU and $N$ PUs\footnote{Each user is assumed to be a pair of transmitter and receiver.}. Each PU is licensed to one of $N$ independent channels, henceforth identified as PU channels.
A PU generates packets according to a stationary process,
transmits over its channel if there are backlogged packets, and leaves upon the completion of the transmissions. The PU traffic activity
is assumed to be identical and independent across channels.

The SU, on the other hand, is backlogged with packets and
opportunistically transmits these packets over
the PU channels with the help of a spectrum server.
Time is divided into two timescales: mini-slots and the control slots each constituting $K$ mini-slots, as illustrated in Fig.~\ref{fig:two_time_scale}.
The length of each mini-slot is normalized to fit the transmission of one data packet of the PU or the SU.
At the beginning of each control slot, the spectrum server schedules the SU to the ``best'' PU channel that is expected to yield the highest average throughput for the SU.
The SU then transmits packets in the scheduled channel, until it detects the return of a PU\footnote{This can be accomplished by incorporating collision detection by the SU at the mini-slot timescale. We also assume that PU arrivals coincide with the mini-slot boundaries.}.
Upon such an event, the SU suspends transmissions until the beginning of the next control slot, when the spectrum server re-schedules the SU to a PU channel based on most recent observations. At the end of each mini-slot when the SU transmitted a packet, it sends accurate feedback on the channel fading state (of the PU channel, as seen by the SU) corresponding to that mini-slot, to the spectrum server.
The spectrum server uses this channel fading feedback, the PU traffic observations, along with the memory inherent in these processes to perform informed scheduling decisions at the beginning of the next control slot. We discuss the system model and the scheduling problem formulation in more detail in the following.

\begin{figure}[tbh]
    \begin{center}
        \includegraphics [width = 0.35\textwidth] {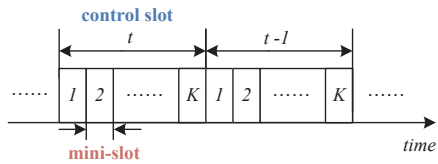}
        \caption{A sketch of the two timescale model.}
        \label{fig:two_time_scale}
    \end{center}
\end{figure}

\subsection{Problem Formulation} \label{subsec:probformulation}
The opportunistic spectrum access at hand can be viewed as a sequential control problem, which we formulate as a partially observable Markov decision process. In the following, we introduce and elaborate the entities involved in the formulation.

\textit{Channel occupancy:} The usage pattern on each of the PU channels can be modeled as an ON-OFF process at the mini-slot timescale, with ON denoting the \textit{busy} state where the PU transmits data over the channel, and OFF the \textit{idle} state where the PU is absent.
Channel occupancy is the idle or busy state of the PU channels. Let $o_{t,k}(n)$ be a binary random variable, denoting whether PU channel $n$, for $n\in\{1,\ldots,N\}$, is idle ($o_{t,k}(n)=0$), or not ($o_{t,k}(n)=1$), in the $k$th mini-slot of control slot $t$. The corresponding idle probability is denoted by $\pi_{t,k}^o(n) \triangleq \textrm{Pr}(o_{t,k}(n)=0)$.

\textit{Idle/Busy age:} The PU traffic is temporally correlated, i.e., the current occupancy state on each of the channel depends on the history of the channel occupancy. We introduce the notion of ``age,'' defined as follows, to characterize the occupancy history:

\textit{The \textbf{age} of a PU channel is the number of consecutive mini-slots immediately preceding the current mini-slot, during which the channel is in the same occupancy state as in the current mini-slot. The age is denoted as ``idle age'' if the channel is in idle state in the current mini-slot and ``busy age'' otherwise.} We use $x_t(n)$ to denote the age of channel $n$ at the beginning of control slot $t$.

As noted earlier, we assume long memory in the PU occupancy state. Specifically, with the definition of age in place, we adopt a family of functions monotonically decreasing in age, to denote the \textit{conditional probability that a channel will be idle (or busy), given that it has been idle (or busy) for $x \geq 1$ mini-slots}:
    \begin{eqnarray} \label{eq:PI_PB}
      P_I(x) &=& \frac{1}{x^u + C_I}, \nonumber\\
      P_B(x) &=& \frac{1}{x^u + C_B}, ~~u = 1,2,...,
    \end{eqnarray}
    where $C_I$ and $C_B$ are normalizing constants taking positive values.
    Our occupancy model essentially imposes the following realistic
    correlation  structures:
    \begin{inparaenum}[\itshape 1\upshape)]
    \item the occupancy memory weakens with time, i.e., the impact of past occupancy events on the current occupancy state diminishes since the said event happened;
    \item the conditional probability that the PU channel is busy or idle now, is purely a function of the length of time the channel has been in the most recent state, and is independent of the channel occupancy history before the time of the latest transition to the most recent state.
    \end{inparaenum}
    In sight of this, the quantities $P_I$ and $P_B$ defined in (\ref{eq:PI_PB}) are sufficient for capturing the temporal correlation in the channels' PU occupancy state.

\textit{Channel fading model:} At the end of each mini-slot after transmitting a packet, the SU measures the channel fading between its transmitter and receiver on the scheduled channel, and feeds back this information to the spectrum server. {Inspired by recent works~\cite{LAJohnston_TW06, sugumar, myopic:Qing}}, we capture the memory in the fading (of the PU channel)
between the SU's transmitter and receiver using a two-state, first-order Markov chain, with state variations occurring at the mini-slot timescale. The Markov chain model is \textit{i.i.d.} across the PU channels. Each state of the Markov chain corresponds to the degree of decodability of the data sent through the channel, where state 1 denotes full decodability and state 0 denotes zero decodability. Note that the states can also be interpreted as a quantized representation of the underlying channel fading, in the sense that state 1 corresponds to ``good'' channel fading, while state 0 corresponds to ``bad'' fading.
The probability transition matrix of this Markov chain is given as:
\begin{eqnarray} \label{eq:fading_MC}
  \textbf{P}:= \left[
    \begin{array}{cc}
        1 - r & r \\
        1 - p & p
    \end{array}
  \right],
\end{eqnarray}
where $p$ is the conditional probability that the channel fading is good, given that it was good in the previous mini-slot; and $r$ is the conditional probability that the channel fading is good, given that it was bad in the previous mini-slot. Throughout the paper, we will focus on the case when the fading channels are positively correlated, i.e., $p>r$.

\textit{Belief of channel fading state:} Denote by $\pi_{t,k}^s(n)$ the belief of channel fading state in the $k$th mini-slot of control slot $t$ on channel $n$. As is standard~\cite{POMDP, myopic:Qing}, the fading state belief is a sufficient statistic that characterizes the current channel fading state as perceived by the SU.
Further, let $f_{t,k}(a_t)$ be a binary random variable denoting the fading state feedback obtained at the end of the $k$th mini-slot in control slot $t$ on the scheduled channel $a_t$. Also, define $\mathrm{T}^L(\cdot)$, for $L \in\{ 0,1,\ldots\}$, as the $L$th step belief evolution operator, taking the form: for $\gamma \in (0,1)$,
\begin{equation}  \label{eq:TL_alpha}
    \mathrm{T}^L(\gamma) = \mathrm{T}(\mathrm{T}^{L-1}(\gamma)),
\end{equation}
with $\mathrm{T}^0(\gamma) = \gamma$ and $\mathrm{T}(\gamma) = \gamma p + (1-\gamma)r$.
Now, the update of the fading state belief is governed by the underlying Markov chain model, and any new information obtained on the channel fading, i.e.:
\begin{eqnarray} \label{eq:pis}
    \pi_{t,k+1}^s(n) =
    \left\{
    \begin{array}{ll}
        p, ~~~& a_t = n, f_{t,k}(a_t)=1, \\
        r, ~~~& a_t = n, f_{t,k}(a_t)=0, \\
        \mathrm{T}(\pi_{t,k}^s(n)),~~~& a_t \neq n.
    \end{array}
    \right.
\end{eqnarray}

\textit{Action space:} This refers to the set of channels that the scheduling decision is made from. The spectrum server selects channels only from those that are currently idle\footnote{This is a policy level constraint to protect the PU's priority in spectrum access.}, and the action space $\mathcal{A}_t$ in control slot $t$ can thus be written as:
\begin{equation}
  \mathcal{A}_t = \{n:o_{t,1}(n) = 0\}.
\end{equation}

\textit{State:} At the beginning of each control slot, the spectrum server makes the scheduling decision based on three factors: For each of the PU channels,
\begin{inparaenum}[\itshape 1\upshape)]
\item the idle/busy state at the moment;
\item the length of time the channel has been in the current occupancy state (i.e., age); and
\item the fading state belief value.
\end{inparaenum}
That is, the state of each PU channel $n$, is represented by a three dimensional vector: $S_t(n) = [o_{t,1}(n), x_t(n), \pi_{t,1}^s(n)]$. Accordingly, the state of the system at the beginning of current control slot $t$ is described by a $N \times 3$ matrix $\textbf{S}_t$:
\begin{eqnarray}
  \textbf{S}_t \hspace{-0.5mm}:= \hspace{-0.5mm}[S_t(1);\ldots;S_t(N)] \hspace{-1mm}=\hspace{-1mm} \left[\hspace{-1mm}
    \begin{array}{ccc}
        o_{t,1}(1) & x_t(1) & \pi_{t,1}^s(1) \\
        o_{t,1}(2) & x_t(2) & \pi_{t,1}^s(2) \\
        \vdots & \vdots & \vdots \\
        o_{t,1}(N) & x_t(N) & \pi_{t,1}^s(N)
    \end{array}
  \hspace{-1mm}\right].
\end{eqnarray}

\textit{Horizon:} The horizon
is the number of consecutive control slots over which scheduling is performed.
We index the control slots in a decreasing order with control slot $1$ being the end of the horizon\footnote{For the mini-slots, we use the conventional increasing time indexing.}.
Throughout the paper, we denote the length of the horizon by $m$, i.e., the scheduling process begins at control slot $m$.

\textit{Stationary scheduling policy:} A stationary scheduling policy $\mathcal{P}$
establishes a stationary mapping from the current state $\textbf{S}_t$ to an action $a_t$ in each control slot $t$.

\textit{Expected immediate reward:} The expected immediate reward is the reward accrued by the SU within the current control slot. Specifically, the SU collects one unit of reward in each mini-slot, if the channel is idle and has good channel fading (i.e., conditions that indicate successful transmission by SU). Since the scheduled channel must be idle in the first mini-slot of the current control slot, the expected immediate reward can be calculated as:
\begin{equation}\label{eq:Rt_general}
  R_t(\textbf{S}_t,a_t)=\sum_{k=2}^K \pi_{t,k}^o(a_t) \pi_{t,k}^s(a_t) + \pi_{t,1}^s(a_t).
\end{equation}

\textit{Total discounted reward:}
Given a scheduling policy $\mathcal{P}$, the total discounted reward, accumulated from the current control slot $t$, until the horizon, can be written as\footnote{In the subsequent sections, we may drop $\mathcal{P}$ and the tiers of expectation to simplify the notation.}
\begin{eqnarray}
  V_t(\textbf{S}_t; \mathcal{P}) = R_t(\textbf{S}_t,a_t) + \beta E_{{\bm{\pi}}_{t-1}^s} E_{\textbf{O}_{t-1}}  V_{t-1}\left(\textbf{S}_{t-1}; \mathcal{P}\right),
\end{eqnarray}
where $\beta\in(0,1)$ is the discount factor, facilitating relative weighing between the immediate and future rewards, and the expectation is taken with respect to fading state belief: $\bm{\pi}_{t-1}^s = \{\pi_{t-1,1}^s(1),\ldots,\pi_{t-1,1}^s(N)\}$, and PU occupancy: $\textbf{O}_{t-1} = \{o_{t-1,1}(1),\ldots,o_{t-1,1}(N)\}$.

\textit{Objective function:} The objective of the scheduling problem is to maximize the SU's throughput, i.e., SU's total discounted reward. A scheduling policy $\mathcal{P}^*$ is optimal if and only if the following optimality equation is satisfied:
\begin{equation}
  V^(\textbf{S}_t; \mathcal{P}^*) \hspace{-1mm} = \hspace{-1mm} \max_{a_t \in \mathcal{A}_t}\Bigg\{R_t(\textbf{S}_t, a_t)+ \beta E_{{\bm{\pi}}_{t-1}^s} E_{\textbf{O}_{t-1}} V^(\textbf{S}_{t-1}; \mathcal{P}^*)\Bigg\}.
\end{equation}
The function $V^*(\textbf{S}_t) := V^(\textbf{S}_t; \mathcal{P}^*)$ is the objective function of the scheduling problem.

\section{Fundamental Tradeoffs} \label{sec:tradeoffs}

The decision on opportunistic spectrum scheduling is made based on two sets of system states: the PU occupancy on the channel and the channel fading perceived by the SU. On one hand, PUs may return in the middle of a control slot and hinder further transmissions of the SU, leading to a decreased reward for the SU. The temporal memory resident in the PU occupancy suggests that the past history of channel's occupancy, measured by the age,
influences the occupancy state of the channel in the future.
On the other hand, the PU channels
may suffer from ``bad'' channel fading in the middle of a control slot, even if a PU does not return to hinder SU's transmissions..
Similar to the PU occupancy, the historic observation on the fading process would help determine the expected channel fading in the future. Note that by way of the channel feedback arrangement, an observation of a PU channel fading is made only when that channel is scheduled to the SU. Thus scheduling is inherently tied to channel fading learning.
Roughly speaking, to maximize the SU's reward, the spectrum server must schedule a channel such that the combination of the perceived channel occupancy and channel fading strikes a ``perfect'' balance between the immediate gains and channel learning for future gains. We discuss this intricate tradeoff in the following.

\subsection{Classic ``Exploitation vs. Exploration'' Tradeoff}
In the existing literature (e.g., \cite{myopic:Qing, sugumar, LAJohnston_TW06, QZhao_selfsimilar, ASwami09}), focus has been cast on considering only one of the factors: either channel fading or channel occupancy, along with the associated temporal correlation. The optimal decision is a mapping that best balances the tradeoff of ``exploitation'' and ``exploration'' on the single factor being considered. The exploitation side lets the scheduler choose the channel with the best perceived channel fading (or occupancy state) at the moment, corresponding to immediate gains; while the exploration side tends to favor the channel with the least learnt information so far, probing which can contribute to the overall understanding of the channel fading (or occupancy state) in the network, and thus better opportunistic scheduling decisions in the future.

\subsection{``Exploitation vs. Exploration'' Tradeoff in Dynamics of Both Channel Fading and PU Occupancy States}

In contrast to the existing works, we examine the tradeoffs when the temporal correlation in \textit{both} the channel fading and PU occupancy are considered.
While the classic tradeoff described above apparently exists, additional tradeoffs arise in our context due to the interactions between the two sets of system states.
In particular, the long temporal memory in the PU occupancy state adds a new layer to the tradeoffs inherent in the problem. For instance, note that the SU can only transmit on \textit{idle} channels. To carry out exploration, the channel being favored in the traditional sense, i.e., the one on which the least information is available, may no longer be the preferred choice if this channel is perceived to be unavailable (i.e., busy) for a prolonged duration in the near future. In other words, it may not be worth learning the channel as the SU cannot utilize the learned knowledge in the near future.

Fig.~\ref{fig:tradeoffs} is a pictorial illustration of the impact from the occupancy state on the SU's expected reward. The history of occupancy, represented by the idle ages $x_t(1)$ and $x_2(t)$, affects both the immediate and future rewards of the SU. Specifically, as the idle age increases, the temporal memory in the PU's occupancy pushes the channel to transit to busy sooner (i.e., time point $b$ comes earlier than $a$ in the figure). Therefore,
the average availability on the PU channel in the current control slot decreases, which leads to a smaller immediate reward for the SU.

Further, note that the latest mini-slot for which the spectrum server receives channel fading feedback is also the last mini-slot before the PU returns, i.e., time points $a$ and $b$ respectively for the two cases in Fig.~\ref{fig:tradeoffs}. The duration $d_1$ (likewise, $d_2$) in Fig.~\ref{fig:tradeoffs} is an indication of how ``fresh'' the channel fading information is for the scheduling decision at the beginning of the next control slot, i.e., $t-1$. With $d_1 < d_2$, channel feedback is more fresh in the former case, with a lower idle age $x_t(1)$.
Thus, age, through its effect on the \textit{freshness} of feedback, and the availability of the PU channels in the future slots, adds another layer to the tradeoffs, thereby influencing the optimal scheduling decision.

\begin{figure}[tbh]
    \begin{center}
        \includegraphics [width = 0.35\textwidth] {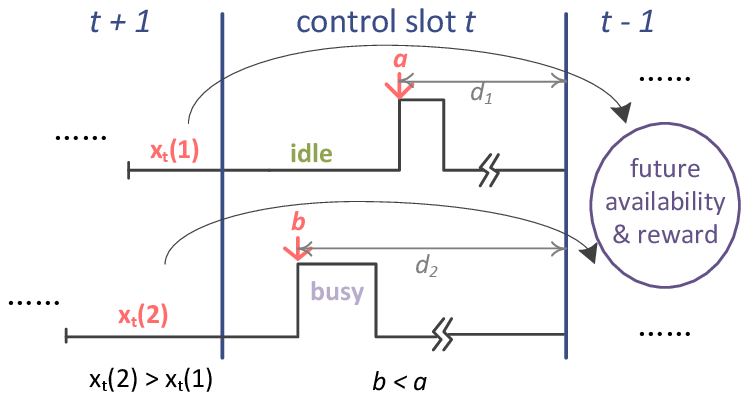}
        \caption{An illustration of the impact of age on the SU's reward.}
        \label{fig:tradeoffs}
    \end{center}
\end{figure}

To better perceive the intricate tradeoffs in the system, we proceed, in what follows, with a number of break-down results that aim at illustrating each tier of the tradeoffs progressively.

\subsection{Tradeoffs Inherent in Immediate Reward}

Consider the PU channel scheduled in the current control slot $t$. Let $k_0$ denote the latest mini-slot before the PU of the scheduled channel returns in the current control slot.
Clearly, $k_0$ is a random variable, taking values in $k_0 \in \{1,\ldots,K\}$. Let $p_z \triangleq \textrm{Pr}(k_0 = z)$. With $x$ denoting the idle age of the scheduled channel, $k_0$ is distributed as follows: For $K=2$:
\begin{equation*}
  p_1  = 1 - P_I(x+1),~~~
  p_2 = P_I(x+1),
\end{equation*}
and for $K \geq 3$:
\begin{align} \label{eq:pz_Kgeq3}
  p_1 &= 1 - P_I(x+1), \nonumber \\
  p_z &= \prod_{j=1}^{z-1} P_I(x+j) (1-P_I(x+z)),z=2,\ldots,K-1,\nonumber \\
  p_K &= \prod_{j=1}^{K-1} P_I(x+j).
\end{align}

In the following lemma, we establish the structure of the distribution of $k_0$.
\begin{lemma} \label{lemma:dist_k0}
The distribution of $k_0$ is monotonically decreasing in the idle age, $x_t$, for $z = 2,\ldots, K$, and monotonically increasing in $x_t$, for $z=1$.
\end{lemma}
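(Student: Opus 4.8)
The plan is to handle the three regimes $z=1$, $2\le z\le K-1$, and $z=K$ separately, since only the middle one is delicate, and throughout to compare consecutive integer ages $x$ and $x+1$ (I write $x$ for $x_t$). For $z=1$ we have from \eqref{eq:pz_Kgeq3} that $p_1=1-P_I(x+1)$, and since $P_I$ is decreasing by construction, $p_1$ is increasing in $x$, as claimed. For $z=K$ we have $p_K=\prod_{j=1}^{K-1}P_I(x+j)$, a product of strictly positive, strictly decreasing functions of $x$, hence itself decreasing; this also disposes of the degenerate case $K=2$, where $p_2=P_I(x+1)$.

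The substance is the range $2\le z\le K-1$, where
\[
  p_z(x)=\Big(\prod_{j=1}^{z-1}P_I(x+j)\Big)\big(1-P_I(x+z)\big)
\]
is a product of a \emph{decreasing} factor and an \emph{increasing} factor, so monotonicity is not automatic. I would form the ratio $p_z(x+1)/p_z(x)$ (everything is positive): the product $\prod_{j=1}^{z-1}P_I(\cdot)$ telescopes under this ratio, leaving
\[
  \frac{p_z(x+1)}{p_z(x)}=\frac{P_I(x+z)}{P_I(x+1)}\cdot\frac{1-P_I(x+z+1)}{1-P_I(x+z)},
\]
so it is enough to prove $P_I(x+z)\big(1-P_I(x+z+1)\big)\le P_I(x+1)\big(1-P_I(x+z)\big)$.

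Here I would invoke the explicit form $P_I(y)=1/(y^u+C_I)$. Clearing the positive denominators and writing $P=(x+1)^u+C_I$, $Q=(x+z)^u+C_I$, $R=(x+z+1)^u+C_I$, the inequality becomes $P(R-1)\le R(Q-1)$, i.e.\ $R(Q-P-1)+P\ge 0$. The elementary fact that makes this true is $Q-P=(x+z)^u-(x+1)^u\ge 1$: since $z\ge 2$ it suffices that $(x+2)^u-(x+1)^u\ge 1$, which holds because $t\mapsto t^u$ is convex, so $(t+1)^u\ge t^u+ut^{u-1}\ge t^u+1$ for $t\ge1$ and $u\ge1$. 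Given $Q-P\ge1$ we get $R(Q-P-1)+P\ge P>0$, which closes the case.

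The main obstacle is exactly this middle regime: because $p_z$ is a product of an increasing and a decreasing term, no purely order-theoretic argument relying only on monotonicity of $P_I$ can succeed — one can pick $a>b>c$ in $(0,1)$ with $b(1-c)>a(1-b)$ — so the proof must use the specific algebraic form \eqref{eq:PI_PB}. The work there is to distill from that form the single clean inequality $(x+z)^u-(x+1)^u\ge1$, after which everything reduces to bookkeeping.
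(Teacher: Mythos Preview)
Your argument is correct and follows the same overall strategy as the paper: treat $z=1$ and $z=K$ by monotonicity of $P_I$ alone, and for the middle range use the explicit form \eqref{eq:PI_PB} together with the elementary fact $(m+1)^u-m^u\ge 1$ for integers $m\ge1$, $u\ge1$.

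The differences are organizational rather than conceptual, but worth noting. The paper does not telescope the full ratio $p_z(x+1)/p_z(x)$; instead it peels off the last two factors, setting $g(x)=P_I(x+z-1)\bigl(1-P_I(x+z)\bigr)$ so that $p_z=\bigl[\prod_{j=1}^{z-2}P_I(x+j)\bigr]\,g(x)$, and then shows $g(x)/g(x+\Delta)>1$ by splitting it into three factors each at least $1$. Your telescoping is arguably cleaner, since it reduces everything to the single inequality $P(R-1)\le R(Q-1)$ in one step. The paper also establishes the key bound $(x+z)^u\ge (x+z-1)^u+1$ via the binomial theorem, which uses that $u$ is an integer; your convexity/tangent-line argument $(t+1)^u\ge t^u+ut^{u-1}\ge t^u+1$ works for any real $u\ge1$ and is a mild generalization. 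Finally, your closing remark that pure monotonicity of $P_I$ cannot suffice for the middle regime is a nice observation that the paper leaves implicit.
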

\begin{proof}
  For $K=2$, it is straightforward to establish the conclusion. We next focus on $K\geq 3$. First, it is easy to show that for $z=1$, $p_1$ is monotonically increasing in $x$ since $P_I(x)$ is monotonically decreasing. Further, note that if two positively valued functions, $f_1(x) > 0$ and $f_2(x) > 0$, are both monotonically decreasing in $x$,  i.e., for any positive integer $\Delta \geq 1$,
  \begin{equation*}
    f_1(x) > f_1(x + \Delta),~~~f_2(x) > f_2(x+\Delta),
  \end{equation*}
  then the product of the two, $f_1(x)f_2(x)$, is also a decreasing function since
  \begin{equation*}
    f_1(x) f_2(x) >  f_1(x + \Delta) f_2(x+\Delta).
  \end{equation*}
  Therefore, $p_K$ is monotonically decreasing in $x$.

  Next, we show that $p_z,z=2,\ldots,K-1$ are monotonically decreasing in $x$. Based on the above argument, it is sufficient to show that for any $z = 2,\ldots, K-1$, the following function,
  \begin{equation*}
    g(x) \triangleq P_I(x+z-1) (1-P_I(x+z)),
  \end{equation*}
  is monotonically decreasing.
  We have the following {simplifications}: For $\Delta \geq 1$, a positive integer,
  \begin{eqnarray}
   \frac{g(x)}{g(x+\Delta)} \hspace{-3mm}&=&\hspace{-3mm} \frac{P_I(x+z-1) (1-P_I(x+z))}{P_I(x+\Delta+z-1) (1-P_I(x+\Delta+z))} \nonumber \\
 \hspace{-3mm}&=&\hspace{-3mm} \frac{(x+\Delta+z-1)^u + C_I}{(x+z)^u + C_I} \hspace{-1mm}\cdot\hspace{-1mm} \frac{(x+\Delta+z)^u + C_I}{(x+\Delta+z)^u + C_I-1}
 \cdot \frac{(x+z)^u + C_I-1}{(x+z-1)^u + C_I}
 \nonumber \\
 &\triangleq& {B}_1 {B}_2 {B}_3.
  \end{eqnarray}
  Since $x \geq 0, C_I > 0, u\geq 1$ and $\Delta \geq 1$, we immediately have $B_1 \geq 1$, $B_2 > 1$. Further, using the binomial theorem, we obtain
  \begin{equation}
    (x+z)^u
    = \sum_{i=0}^u {u\choose i} (x+z-1)^i \geq (x+z-1)^u + 1,
  \end{equation}
  and hence $B_3 \geq 1$.
  As a result, $\frac{g(x)}{g(x+\Delta)} > 1$ and  $p_2, \ldots, p_{K-1}$ are monotonically decreasing in $x$. This concludes the proof.
\end{proof}

We next present a result that demonstrates the tradeoff inherent in the immediate reward with respect to age.

\begin{proposition} \label{prop:Rt_age}
  The immediate reward on the scheduled channel
  is monotonically decreasing in the idle age.
\end{proposition}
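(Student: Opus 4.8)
The plan is to read the idle‑age dependence of $R_t(\textbf{S}_t,a_t)$ directly off (\ref{eq:Rt_general}) and argue term by term. Write $x \triangleq x_t(a_t)$ for the idle age of the scheduled channel. Since the scheduled channel is idle in mini‑slot $1$, the probability (conditioned on the slot‑start state) that it is still idle in mini‑slot $k$ is $\pi_{t,k}^o(a_t) = \prod_{j=1}^{k-1} P_I(x+j)$ for $k \geq 2$ --- exactly the product appearing in $p_z$ in (\ref{eq:pz_Kgeq3}) --- while $\pi_{t,1}^o(a_t) = 1$. Substituting into (\ref{eq:Rt_general}),
\[
  R_t(\textbf{S}_t,a_t) = \pi_{t,1}^s(a_t) + \sum_{k=2}^{K} \Big(\prod_{j=1}^{k-1} P_I(x+j)\Big)\,\pi_{t,k}^s(a_t).
\]

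Next I would argue that none of the fading‑belief factors depends on the idle age. Within a control slot the fading‑state belief evolves only through the Markov chain (\ref{eq:fading_MC}) and the SU's feedback, both of which are driven by the fading process; since the fading process is independent of the PU‑occupancy process and $\mathrm{T}(\cdot)$ is affine, the expected fading belief entering the immediate reward in mini‑slot $k$ equals $\pi_{t,k}^s(a_t) = \mathrm{T}^{k-1}(\pi_{t,1}^s(a_t))$ regardless of how many feedback samples are actually received, hence regardless of $x$; moreover it is strictly positive. The leading term $\pi_{t,1}^s(a_t)$ is likewise constant in $x$.

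It then remains to note that $P_I(x+j) = 1/\big((x+j)^u + C_I\big)$ is positive and strictly decreasing in $x$ for every $j \geq 1$ (since $(x+j)^u$ is increasing in $x$ for $u \geq 1$). By the elementary fact established in the proof of Lemma~\ref{lemma:dist_k0} --- a product of positive, monotonically decreasing functions is monotonically decreasing --- each occupancy factor $\prod_{j=1}^{k-1} P_I(x+j)$ is monotonically decreasing in $x$. Hence every summand is a strictly positive constant times a monotonically decreasing function of $x$, and adding the $x$‑independent term $\pi_{t,1}^s(a_t)$ preserves monotonicity, so $R_t(\textbf{S}_t,a_t)$ is monotonically decreasing in the idle age. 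The argument presents no real obstacle; the one point deserving care is the second step --- making it explicit that the idle age does not slip into the fading‑belief terms through the feedback mechanism, i.e., that the learning value of a larger idle age (more SU transmissions, fresher feedback) affects the future reward $V_{t-1}$ but not the immediate reward $R_t$.
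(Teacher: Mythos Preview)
Your argument is correct and takes a genuinely more direct route than the paper. The paper rewrites the immediate reward as $R_t=\sum_{z=1}^K \theta_z\,p_z$ with $\theta_z=\sum_{k\le z}\pi_{t,k}^s(a_t)$ and $p_z=\Pr(k_0=z)$, and then appeals to the \emph{full} conclusion of Lemma~\ref{lemma:dist_k0} (that $p_1$ increases while $p_2,\ldots,p_K$ decrease in $x$) together with an ordering argument on the $\theta_z$'s. You instead keep the survival form $\pi_{t,k}^o(a_t)=\prod_{j=1}^{k-1}P_I(x+j)$ straight from~(\ref{eq:Rt_general}) and note that each such product is a product of positive, strictly decreasing functions of $x$; this uses only the trivial ``product of positive decreasing functions is decreasing'' observation from the proof of Lemma~\ref{lemma:dist_k0}, not its main statement. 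The gain is real: you bypass the nontrivial $B_1B_2B_3$ computation needed to show that $p_z$ is decreasing for $2\le z\le K-1$. The paper's decomposition, on the other hand, makes the stochastic‑ordering of $k_0$ in $x$ explicit, which is of independent interest elsewhere in the paper.

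One small slip that does not affect the proof: in your closing parenthetical it is a \emph{smaller} idle age that yields more SU transmissions and fresher feedback (cf.\ the discussion around Fig.~\ref{fig:tradeoffs}), not a larger one. Since you correctly argue that this feedback effect enters only $V_{t-1}$ and not $R_t$, the main line is unaffected.
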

\begin{proof}
  As the system model implies, we can rewrite the immediate reward as the following weighted sum:
    \begin{equation} \label{eq:Rt}
      R_t(\textbf{S}_t,a_t) = \sum_{z = 1}^K \sum_{k=1}^z \pi_{t,k}^s(a_t)p_z,
    \end{equation}
    where $p_z =  \textrm{Pr}(k_0 = z)$ is given by (\ref{eq:pz_Kgeq3}).

    When $K=2$, with idle age on $a_t$ being $x_t$, we have $R_t(\textbf{S}_t,a_t) = \pi_{t,1}^s(a_t) + P_I(x_t+1) \pi_{t,2}^s(a_t)$. Apparently, it increases as $x_t$ decreases.

    For $K\geq 3$, denote by $\theta_z \triangleq \sum_{k=1}^z \pi_{t,k}^s(a_t), z = 1,\ldots,K$. It is clear that
    \begin{equation}
      \theta_1 < \theta_2 < \cdots < \theta_K,
    \end{equation}
    and $\{\theta_z\}$'s are constants in the idle age $x_t$. To emphasize the role of the argument $x_t$, we rewrite $R_t(\textbf{S}_t,a_t)$ as $R_t(x_t)$, and $p_z$ as $p_z(x_t)$.

    Utilizing the result of Lemma~\ref{lemma:dist_k0}, and noting that for any positive integer $\Delta \geq 1$, $|p_1(x_t) - p_1(x_t + \Delta)| = \sum_{z=2}^K (p_z(x_t) - p_z(x_t + \Delta))$, we obtain:
    \begin{equation*}
      R_t(x_t) -R_t(x_t+\Delta) =\sum_{z=1}^K \theta_z \left(p_z(x_t) - p_z(x_t + \Delta)\right)> \theta_1(p_1(x_t) - p_1(x_t + \Delta)) + \theta_2 \sum_{z=2}^K (p_z(x_t) - p_z(x_t + \Delta))> 0,
    \end{equation*}
    i.e., $R_t(x_t)$ is monotonically decreasing in the idle age $x_t$, and this establishes the proposition.
\end{proof}

The above result can be readily extended to the following corollary.
\begin{corollary} \label{corollary:Rt_age}
  When all the PU channels have equal fading state beliefs, the immediate reward
  is maximized by scheduling the SU to the channel with the lowest idle age.
\end{corollary}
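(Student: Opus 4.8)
The plan is to show that Corollary~\ref{corollary:Rt_age} follows almost immediately from Proposition~\ref{prop:Rt_age}, with the only extra ingredient being a careful bookkeeping of which quantities in the immediate reward actually depend on the idle age once the fading beliefs are fixed and equal across channels.

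First I would recall the expression for the immediate reward in the weighted-sum form~(\ref{eq:Rt}), $R_t(\textbf{S}_t,a_t) = \sum_{z=1}^K \theta_z\, p_z(x_t)$, where $\theta_z = \sum_{k=1}^z \pi_{t,k}^s(a_t)$ and $p_z$ is the distribution of $k_0$ from~(\ref{eq:pz_Kgeq3}). The key observation is that when all PU channels have equal fading state beliefs at the start of control slot $t$, i.e. $\pi_{t,1}^s(n)$ is the same for every $n$, then by the belief-evolution rule~(\ref{eq:pis}) the predicted beliefs $\pi_{t,k}^s(n) = \mathrm{T}^{k-1}(\pi_{t,1}^s(n))$ for $k=1,\ldots,K$ are also equal across channels (these are the beliefs used inside $R_t$ since no feedback arrives until the channel is actually scheduled). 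Consequently the partial sums $\theta_z$ are identical for every candidate channel, and the only channel-dependent quantity left in $R_t(\textbf{S}_t,a_t)$ is the idle age $x_t(a_t)$ entering through $p_z(x_t)$.

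Next I would invoke Proposition~\ref{prop:Rt_age} directly: with the $\theta_z$'s held fixed and common to all channels, the function $x \mapsto R_t(x) = \sum_{z=1}^K \theta_z\, p_z(x)$ is monotonically decreasing in the idle age. Therefore, among the channels in the action space $\mathcal{A}_t$ (all of which are idle and hence admissible scheduling targets), the one with the smallest idle age yields the largest immediate reward, which is exactly the claim. For completeness I would note the $K=2$ case separately, where $R_t = \pi_{t,1}^s + P_I(x_t+1)\pi_{t,2}^s$ and the conclusion is transparent since $P_I$ is decreasing, but this is already subsumed by Proposition~\ref{prop:Rt_age}.

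I do not anticipate a genuine obstacle here — the corollary is essentially a restatement of the proposition once one recognizes that equal fading beliefs make the $\theta_z$ coefficients channel-independent. The one subtlety worth spelling out carefully is precisely that reduction: one must confirm that \emph{all} the beliefs $\pi_{t,k}^s(\cdot)$ appearing in~(\ref{eq:Rt}) are the common $\mathrm{T}^{k-1}$-propagated values (no feedback intervenes within the reward computation for the current slot), so that the comparison across channels reduces cleanly to comparing $x_t(\cdot)$. Everything else is a direct application of the already-established monotonicity.
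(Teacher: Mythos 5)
Your proposal is correct and matches the paper's intent: the paper presents this corollary as an immediate consequence of Proposition~\ref{prop:Rt_age} (it offers no separate proof), and your argument—that equal fading beliefs make the coefficients $\theta_z$ channel-independent, so the comparison reduces to the monotonicity of $R_t$ in the idle age—is exactly the reduction being relied upon. The extra care you take in confirming that the beliefs entering the expected immediate reward are the common $\mathrm{T}$-propagated values is a worthwhile clarification but does not change the route.
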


Next, recall that the channel fading is modeled by a positively-correlated Markov chain. Hence, if $\pi_{t,1}^s(a_t) > \pi_{t,1}^{s'}(a_t)$, then the inequality $\pi_{t,k}^s(a_t) > \pi_{t,k}^{s'}(a_t)$ holds, for all $k =2,\ldots,K$. We present the following proposition without further proof.
\begin{proposition}\label{prop:Rt_highest_pis}
  The immediate reward on the scheduled channel
  is monotonically increasing in its fading state belief at the moment.
  Further, given equal idle ages across all PU channels, the immediate reward is maximized by scheduling the SU to the channel with the largest fading state belief value at the moment.
\end{proposition}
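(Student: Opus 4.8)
The plan is to reuse the weighted-sum representation of the immediate reward established in the proof of Proposition~\ref{prop:Rt_age}, and to combine it with the positive-correlation fact stated just above the proposition. Recall from (\ref{eq:Rt}) that, writing $\gamma \triangleq \pi_{t,1}^s(a_t)$ for the current fading belief on the scheduled channel,
\begin{equation*}
  R_t(\textbf{S}_t,a_t) = \sum_{z=1}^K \Big(\sum_{k=1}^z \pi_{t,k}^s(a_t)\Big)\, p_z ,
\end{equation*}
where the weights $p_z = \textrm{Pr}(k_0 = z)$ depend only on the idle age $x_t(a_t)$ (through $P_I$) and are therefore functionally independent of $\gamma$; moreover $p_z \ge 0$ and $\sum_{z=1}^K p_z = 1$. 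Equivalently, one may use (\ref{eq:Rt_general}) directly, in which the occupancy factors $\pi_{t,k}^o(a_t)$ play the role of non-negative, $\gamma$-free weights. Either way, $R_t$ is a convex combination, with fixed non-negative coefficients, of the partial sums $\theta_z = \sum_{k=1}^z \pi_{t,k}^s(a_t)$.

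First I would settle the monotonicity claim. Since the fading Markov chain is positively correlated ($p>r$), the one-step belief map $\mathrm{T}(\gamma) = r + (p-r)\gamma$ is strictly increasing in $\gamma$, hence so is every iterate $\mathrm{T}^{\,k-1}$; consequently, as already noted in the text preceding the proposition, $\pi_{t,1}^s(a_t) > \pi_{t,1}^{s'}(a_t)$ forces $\pi_{t,k}^s(a_t) > \pi_{t,k}^{s'}(a_t)$ for every $k = 2,\dots,K$. Thus each $\theta_z$ is strictly increasing in $\gamma$, and $R_t$, being a non-negative combination of the $\theta_z$'s in which the total weight on $\pi_{t,1}^s(a_t)$ equals $\sum_z p_z = 1 > 0$ (equivalently, the isolated ``$+\,\pi_{t,1}^s(a_t)$'' term in (\ref{eq:Rt_general})), is strictly increasing in $\pi_{t,1}^s(a_t)$. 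This yields the first assertion.

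For the second assertion I would observe that once the idle ages coincide, the occupancy-dependent weights are common to all channels: for a channel $n$ that is idle in mini-slot $1$ with idle age $x$, we have $\pi_{t,k}^o(n) = \prod_{j=1}^{k-1} P_I(x+j)$ for $k\ge 2$, which depends on $n$ only through $x$, and likewise for the $p_z$'s. Hence, restricting attention to the action space $\mathcal{A}_t$ of currently idle channels, the map $n \mapsto R_t(\textbf{S}_t,n)$ is one and the same increasing function of the respective belief $\pi_{t,1}^s(n)$, and is therefore maximized by the channel with the largest current fading state belief.

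I do not anticipate a genuine obstacle: the result is essentially a monotonicity bookkeeping argument resting on Proposition~\ref{prop:Rt_age}'s representation and the strict monotonicity of $\mathrm{T}$. The only point deserving care is in the second part, namely making explicit that ``equal idle ages'' renders the occupancy weights channel-independent and that the comparison must be taken over the idle channels only, since non-idle channels are excluded from $\mathcal{A}_t$ by construction.
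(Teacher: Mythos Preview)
Your proposal is correct and follows exactly the approach the paper intends: the paper states the key observation that positive correlation ($p>r$) makes $\pi_{t,1}^s(a_t) > \pi_{t,1}^{s'}(a_t)$ propagate to $\pi_{t,k}^s(a_t) > \pi_{t,k}^{s'}(a_t)$ for all $k$, and then presents the proposition ``without further proof.'' Your write-up simply fills in the bookkeeping the paper omits, using the same weighted-sum representation from Proposition~\ref{prop:Rt_age} and the same monotonicity of $\mathrm{T}$.
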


\subsection{Tradeoffs Inherent in Total Reward}
In this subsection, we illustrate the tradeoffs inherent in the total reward by examining a special case with two channels $N=2$ and number of mini-slots $K=1$. In particular, we show that under these conditions, a simple greedy scheduling policy is optimal. The greedy policy is formally defined as follows: In any control slot, the greedy decision maximizes the immediate reward, ignoring the future rewards, i.e.,
\begin{equation}
    \hat{a}_t = \max_{a_t\in{\mathcal{A}_t}} \{R_t(\textbf{S}_t,a_t)\}.
\end{equation}
We now formally record the result on greedy policy optimality in the following proposition.
\begin{proposition} \label{prop:K=1}
    The greedy policy is optimal when $K=1$ and $N=2$.
\end{proposition}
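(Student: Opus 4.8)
The plan is to prove optimality of the greedy policy by backward induction on the horizon length $m$. With $K=1$, the immediate reward on a scheduled idle channel $a_t$ is simply $R_t(\textbf{S}_t,a_t) = \pi_{t,1}^s(a_t)$, since the sum over $k=2,\ldots,K$ is empty. Moreover, with $K=1$ there is no intra-slot PU return to track, so at the start of control slot $t-1$ the occupancy on channel $n$ evolves one step according to $P_I(x_t(n)+1)$ (if idle) or $P_B(x_t(n)+1)$ (if busy), and the fading belief on the \emph{scheduled} channel is reset to $p$ or $r$ depending on the single feedback bit, while the fading belief on the unscheduled channel advances by one application of $\mathrm{T}(\cdot)$. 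The base case $m=1$ is immediate: with a one-slot horizon, maximizing $V_1 = R_1$ is exactly the greedy choice. For the inductive step I would assume the greedy policy is optimal for all horizons of length $m-1$, and argue it is optimal at length $m$.

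\textbf{The core argument.} Write the value function at horizon $m$ as $V_m(\textbf{S}_m) = \max_{a\in\mathcal{A}_m}\{\pi_{m,1}^s(a) + \beta\, E[V_{m-1}(\textbf{S}_{m-1}) \mid a]\}$. When $N=2$ and at least one channel is idle, either both channels are idle (the interesting case) or exactly one is, in which case there is no choice. So suppose channels $1$ and $2$ are both idle with fading beliefs $\pi_1 \geq \pi_2$ and idle ages $x_1, x_2$. By Proposition~\ref{prop:Rt_age} and Proposition~\ref{prop:Rt_highest_pis}, a larger belief and a smaller age each help the immediate reward; the delicate point is the continuation value. The key claim I would establish is a \emph{value-monotonicity lemma}: $V_{m-1}(\textbf{S})$ is nondecreasing in each channel's fading belief and nonincreasing in each channel's idle age, with all other components fixed. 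Granting this lemma, I would compare the two candidate actions by a coupling/exchange argument: scheduling the channel that is greedily preferred (higher $\pi$, lower age — and by Corollary~\ref{corollary:Rt_age} and Proposition~\ref{prop:Rt_highest_pis} when one dominates the other this is unambiguous) both yields the larger immediate reward \emph{and} leaves the unscheduled channel in a ``better or equal'' state for the next slot (its belief advances from the larger $\pi$ under the monotone map $\mathrm{T}$, which preserves order since $p>r$; its occupancy advances from the same age). The scheduled channel's next-slot belief is $p$ or $r$ regardless of which one we picked, so that term is distributionally identical. Hence term-by-term the greedy action dominates, which closes the induction. The genuinely mixed case — where one channel has the higher belief but also the higher age — requires care: here I would observe that with $K=1$ the immediate reward depends only on $\pi_{m,1}^s$, not on age at all, so the greedy choice is simply the larger-belief channel, and the age only enters the continuation; one then checks that the larger-belief channel is still at least as good for the future, or falls back on the explicit two-state occupancy recursion.

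\textbf{Proving the monotonicity lemma.} This lemma is itself proved by backward induction on the horizon, and I expect it to be the main obstacle. The base case ($V_1 = R_1 = \max$ of the idle beliefs) is monotone by inspection. For the step, one writes $V_{m-1}$ via the optimality equation and must verify that (i) the immediate reward $\max_a \pi^s(a)$ is monotone — clear; (ii) the expectation over the next occupancy vector $\textbf{O}$ preserves monotonicity — here decreasing an age increases $P_I(x+1)$, shifting probability toward the ``idle'' branch, and one needs that being idle (having an available action) is weakly better than being busy, which is intuitively clear but should be checked against the convention for the action space when a channel is busy; and (iii) the expectation over the fading feedback preserves monotonicity in the \emph{current} belief — this is the standard fact that $\pi \mapsto \pi\,(\text{value}\mid f{=}1) + (1-\pi)(\text{value}\mid f{=}0)$ is monotone when the post-feedback beliefs $p,r$ don't depend on $\pi$, combined with the order-preservation of $\mathrm{T}$ on the unscheduled channel. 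The subtle bookkeeping is handling the cases where one or both channels are busy so $\mathcal{A}_{m-1}$ shrinks; I would dispose of these by noting that a busy channel contributes no immediate reward and its state only matters through future transitions, and set up the induction hypothesis uniformly over all occupancy configurations so these cases are covered automatically.
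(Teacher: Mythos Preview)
Your coupling argument in the core step has a real error. You write that after scheduling the greedily preferred channel (say channel~1 with $\pi_1 \geq \pi_2$), ``the scheduled channel's next-slot belief is $p$ or $r$ regardless of which one we picked, so that term is distributionally identical.'' It is not: the scheduled channel's next belief equals $p$ with probability equal to its \emph{current} belief, so scheduling channel~1 yields $p$ with probability $\pi_1$, whereas scheduling channel~2 yields $p$ with probability $\pi_2<\pi_1$. The two distributions over $\{p,r\}$ differ. Your companion claim that greedy ``leaves the unscheduled channel in a better or equal state'' is also backwards: under greedy the unscheduled channel is channel~2, advancing to $\mathrm{T}(\pi_2)$; under non-greedy the unscheduled channel is channel~1, advancing to $\mathrm{T}(\pi_1)\geq \mathrm{T}(\pi_2)$. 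So non-greedy in fact leaves the unscheduled channel with the \emph{higher} belief. These two effects pull in opposite directions, and a value-monotonicity lemma on $V_{m-1}$ by itself does not settle the comparison. (Note also that ages evolve independently of the scheduling action, so your age-monotonicity is irrelevant for comparing actions; it is only the belief part that matters.)

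The paper's proof takes a sharper and structurally different route: it shows that, under greedy continuation, the expected immediate reward in every future slot is \emph{independent} of the current action, so the greedy choice is optimal simply because it wins on the immediate term. The decisive computation is that when both channels are idle at the next slot, the greedy reward there equals
\[
\alpha\,p + (1-\alpha)\,\mathrm{T}(\tilde\alpha) \;=\; p\bigl(\alpha+\tilde\alpha-\alpha\tilde\alpha\bigr) + r\,(1-\alpha)(1-\tilde\alpha),
\]
which is \emph{symmetric} in the two prior beliefs $(\alpha,\tilde\alpha)$ and therefore the same whichever channel was scheduled in the previous slot. The single-idle cases collapse because $\mathrm{T}(\pi)=\pi p+(1-\pi)r$ is exactly the post-feedback average, and the occupancy probabilities are action-free since ages evolve autonomously. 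This symmetry --- not monotonicity --- is the missing idea in your proposal; even if your monotonicity lemma holds, it is neither necessary nor sufficient for the conclusion.
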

\begin{proof}
    To prove the proposition, we begin with the following induction hypothesis:

    \textit{Induction Hypothesis: With the length of the horizon denoted by $m, m \geq 2$, assume that greedy policy is optimal in all the control slots $t\in\{m-1, \ldots,1\}$.}

    The proof proceeds in two steps: First, we fix a sequence of scheduling decisions $\vec{l} := \{a_{m},\ldots,a_{t+1}\}$, and show that the expected immediate reward in control slot $t$, under the greedy policy, is independent of the scheduling decisions $\vec{l}$. Then, we provide induction based arguments to validate the induction hypothesis and hence establish that the greedy policy is optimal in all the control slots.

    Let $U_t^{(\vec{l})}$ denote the expected immediate reward in slot $t \in\{m-1,\ldots,1\}$, given the scheduling decisions $\vec{l}$. $U_t^{(\vec{l})}$ can be calculated as:
    \begin{equation} \label{eq:Ut_l}
      U_t^{(\vec{l})} = \sum_{\{o_t(1),o_t(2)\}} U_t^{(\vec{l})}(o_t(1),o_t(2)) \textrm{Pr}(o_t(1),o_t(2)),
    \end{equation}
    where $o_t(1)$ (likewise, $o_t(2)$) is the binary indicator of whether channel 1 (or 2) is idle ($o_t(1)=0$) or not ($o_t(1)=1$) in the $t$th control slot, and $\textrm{Pr}(o_t(1),o_t(2))$ denotes the joint probability of both channels' availability status (idle or busy).
    There exist four realizations of the vector $(o_t(1),o_t(2))$, namely $\{(0,0),(0,1),(1,0),(1,1)\}$.
    In what follows, we show that the value of $U_t^{(\vec{l})}(o_t(1),o_t(2))$ calculated under each of the realizations is independent of the scheduling decisions $\vec{l}$.

    \textit{Case 1: When $(o_t(1),o_t(2)) = (0,0)$.} In this case, both channels are idle in control slot $t$. Let $\pi_{t+1}^s(n)$ be the fading state belief on channel $n$ in control slot $t+1$. The expected immediate reward in control slot $t$, under the greedy policy, can be calculated as
    \begin{equation}
      U_t^{(\vec{l})}(0,0) \hspace{-1mm}=\hspace{-1mm} \pi_{t+1}^s(a_{t+1})p
                +(1-\pi_{t+1}^s(a_{t+1}))\mathrm{T}(\pi_{t+1}^s(\tilde{a}_{t+1})),
    \end{equation}
    where $\tilde{a}_{t+1} = \{1,2\} \backslash a_{t+1}$.
    For notational convenience, we write $\alpha \triangleq \pi_{t+1}^s(a_{t+1})$ and $\tilde{\alpha} \triangleq \pi_{t+1}^s(\tilde{a}_{t+1})$. The reward $U_t^{(\vec{l})}(0,0)$ can now be further expressed as
    \begin{eqnarray}
      U_t^{(\vec{l})}(0,0) &=& p \alpha + (1-\alpha) \mathrm{T}(\tilde{\alpha}) \nonumber \\
                &=&  p \alpha + (1-\alpha) (\alpha p + (1-\alpha)r) \nonumber \\
                &=& p P_1 + r P_2,
    \end{eqnarray}
    where
    \begin{equation}
        P_1 \triangleq \alpha + \tilde{\alpha} - \alpha \tilde{\alpha}, ~~~P_2 \triangleq (1-\alpha)(1-\tilde{\alpha}).
    \end{equation}
    That is, $P_1$ is the probability that at least one of the channels experiences good channel fading in the previous control slot $t+1$, while $P_2$ is the probability that both channels see bad channel fading. It is noted that these probabilities are controlled by the underlying Markov dynamics only, and thus $P_1$ and $P_2$ are independent of the scheduling decisions $\vec{l}$. Therefore, $U_t^{(\vec{l})}(0,0)$ is independent of $\vec{l}$.

    \textit{Case 2: When $(o_t(1),o_t(2)) = (0,1)$.} In this case, only channel 1 is idle and can be scheduled. The reward $U_t^{(\vec{l})}(0,1)$ is obtained as
    \begin{eqnarray}
      U_t^{(\vec{l})}(0,1) =
      \left\{
        \begin{array}{ll}
         p \pi_{t+1}^s(1) + r(1-\pi_{t+1}^s(1)), &\hspace{-2.5mm}\textrm{if}~ a_{t+1} = 1,\\
        \mathrm{T}(\pi_{t+1}^s(1)), &\hspace{-2.5mm}\textrm{if}~ a_{t+1} = 2.
      \end{array}
      \right. \hspace{-2mm}
    \end{eqnarray}
    It follows from~(\ref{eq:TL_alpha}) that
    \begin{equation}
      U_t^{(\vec{l})}(0,1)|_{a_{t+1} = 1} = U_t^{(\vec{l})}(0,1)|_{a_{t+1} = 2},
    \end{equation}
    i.e., $U_t^{(\vec{l})}(0,1)$ is independent of $\vec{l}$.

    \textit{Case 3: When $(o_t(1),o_t(2)) = (1,0)$.} Similar to \textit{Case 2}, only channel 2 can be scheduled in this case, and we have:
    \begin{eqnarray}
      U_t^{(\vec{l})}(1,0)=
      \left\{
        \begin{array}{ll}
        \mathrm{T}(\pi_{t+1}^s(2)), &\hspace{-2.5mm}\textrm{if}~ a_{t+1} = 1, \\
        p \pi_{t+1}^s(2) + r(1-\pi_{t+1}^s(2)), &\hspace{-2.5mm}\textrm{if}~ a_{t+1} = 2.
      \end{array}
      \right. \hspace{-2mm}
    \end{eqnarray}
    Again, this indicates that
    \begin{equation}
      U_t^{(\vec{l})}(1,0)|_{a_{t+1} = 1} = U_t^{(\vec{l})}(1,0)|_{a_{t+1} = 2},
    \end{equation}
    i.e., $U_t^{(\vec{l})}(1,0)$ is independent of $\vec{l}$.

    \textit{Case 4: When $(o_t(1),o_t(2)) = (1,1)$.} In this case, both channels are busy, and it follows immediately that
    \begin{equation}
      U_t^{(\vec{l})}(1,1) = 0.
    \end{equation}
    Clearly, $U_t^{(\vec{l})}(1,1)$ is independent of $\vec{l}$ as well.

    Next, note that $\textrm{Pr}(o_t(1),o_t(2))$ is a function of the ages $(x_t(1),x_t(2))$ only, which evolve independently from the scheduling decisions $\vec{l}$. Thereby, we conclude that $\textrm{Pr}(o_t(1),o_t(2))$ is independent of the scheduling decisions $\vec{l}$, and so is the expected immediate reward in control slot $t$, i.e, $U_t^{\vec{l}} = U_t$.
    Now, by extension, we have that the total reward collected from control slot $t$ till the horizon is independent of $\vec{l}$, i.e.,
    \begin{equation*}
      \sum_{t'= t}^1 U_{t'}^{\vec{l}} = \sum_{t'=t}^1 U_{t'}.
    \end{equation*}
    Thus, the greedy policy is optimal in control slot $t+1$ as well.
    Since $t \in \{m-1,\ldots\}$ is arbitrary, the greedy policy is optimal in every control slot $\{m,\ldots,1\}$ under the induction hypothesis. Finally, as the greedy policy is trivially optimal at the horizon, i.e., $t=1$, using backward induction, we validate the induction hypothesis, and arrive at the conclusion that greedy is optimal in all control slots $t \in \{m,\ldots,1\}$. This establishes the proposition.
\end{proof}

{\bf Remarks:} Note that the tradeoffs inherent in the special case considered above, i.e., $K=1, N=2$, is more intricate than those observed in related recent works~(e.g., \cite{myopic:Qing, sugumar}), where a control slot coincides with a mini-slot and only one of the states: channel fading or PU occupancy, is considered. This is because, despite $K=1$, the question of ``whether to learn a channel that may not be available for scheduling in the near future due to channel occupancy state'' still exists. Thus the tradeoffs discussed in the preceding subsections are retained in this special case, essentially adding value to our result on greedy optimality. In the subsequent section, we proceed to further understand the tradeoffs in the original system by introducing a conceptual ``genie-aided system.''

\section{Multi-tier Tradeoffs: A Closer Look via A Genie-Aided System} \label{sec:tradeoff_ga}
In the previous section, we partially showed the interaction between various state elements
by examining the immediate reward and certain special cases.
In order to obtain a more complete understanding of the inherent dynamics, we next introduce a full-observation genie-aided system that helps decompose and characterize the various tiers of the multi-dimensional tradeoffs.

\subsection{A Genie-Aided System}

The genie-aided system is a variant of the original system with the following modification: The spectrum server receives channel fading feedbacks from \textit{all} the channels and not only the scheduled channel. These feedbacks are collected at the same times as those of the feedback from the scheduled channel. Thus when the PU returns on the scheduled channel, the feedback from \textit{all} the channels stop at once. Note that this is a conceptual system, without practical significance, which as we will see, is helpful in better understanding the complicated tradeoffs inherent in the original system.

\subsection{Tradeoffs Associated with Channel Fading}

\begin{proposition} \label{prop:ga_sameage}
  When the idle ages are the same across all PU channels,
  it is optimal to schedule the SU to the channel with the highest fading state belief at the moment, i.e.,
  \begin{equation}
    a_t^* = \arg \max_n \{\pi_{t,1}^s(n)\}.
  \end{equation}
\end{proposition}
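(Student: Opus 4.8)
The plan is to run the Bellman optimality equation of the genie-aided system and show that, under the equal-idle-age hypothesis, the discounted future term is the same for every feasible action, so that the maximizer is dictated by the immediate reward alone. With $V^*$ the objective function of the genie-aided system, write
\[
 V^*(\textbf{S}_t) = \max_{a_t \in \mathcal{A}_t}\Big\{ R_t(\textbf{S}_t,a_t) + \beta\, E_{\bm{\pi}_{t-1}^s} E_{\textbf{O}_{t-1}}\, V^*(\textbf{S}_{t-1}) \Big\}.
\]
Equal idle ages forces every channel to be idle, so $\mathcal{A}_t = \{1,\dots,N\}$ and $a_t^* = \arg\max_n \pi_{t,1}^s(n)$ is feasible; if the maximum is attained by several channels any of them will do, with the comparisons below holding with equality, so we may take the maximizer to be unique.

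The immediate term is settled by the second assertion of Proposition~\ref{prop:Rt_highest_pis}: with all idle ages equal, $R_t(\textbf{S}_t,\cdot)$ is maximized at the channel with the largest current fading belief, i.e.\ at $a_t^*$. (The mechanism is that the common idle age makes the weights $\{p_z\}$ in \eqref{eq:pz_Kgeq3}, hence the representation $R_t(\textbf{S}_t,a_t)=\sum_{z=1}^K \theta_z^{(a_t)} p_z$ of \eqref{eq:Rt} with $\theta_z^{(a_t)}=\sum_{k=1}^z \pi_{t,k}^s(a_t)$, common to all candidates, while $p>r$ makes each $\theta_z^{(a_t)}$ increasing in $\pi_{t,1}^s(a_t)$.)

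For the future term I would argue that $E_{\bm{\pi}_{t-1}^s} E_{\textbf{O}_{t-1}} V^*(\textbf{S}_{t-1})$ does not depend on $a_t$ in the genie-aided system, from: (a) by construction of the genie-aided system the fading feedback of \emph{every} channel is delivered in mini-slots $1,\dots,k_0$ regardless of which channel is scheduled, so the belief update \eqref{eq:pis} applied to each channel is driven only by that channel's own fading realization and by $k_0$, the scheduled/unscheduled label being irrelevant for belief evolution; (b) the scheduling decision enters the transition only through the realization of $k_0$, the last idle mini-slot of the scheduled channel; (c) every channel's PU occupancy process evolves autonomously and these processes are i.i.d.\ across channels from the common age, so the law of $k_0$ — and hence the marginal law of the vector of next-slot ages — is the same for every candidate. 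The natural tool for assembling (a)--(c) is a coupling: to compare $a_t=i$ with $a_t=j$, realize both systems on one probability space with the occupancy sample paths of channels $i$ and $j$ interchanged (legitimate since they start from the same age and the occupancy law is i.i.d.) and all fading sample paths left in place; this forces the realized $k_0$ to coincide, hence all slot-$(t-1)$ beliefs to coincide, and leaves the two slot-$(t-1)$ states differing only by a transposition of the ages of channels $i$ and $j$, so the comparison reduces to invariance of $E\, V^*(\textbf{S}_{t-1})$ under relabeling the channel index.

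Combining the two terms gives $V^*(\textbf{S}_t) = \max_{a_t} R_t(\textbf{S}_t,a_t) + \beta\cdot(\text{a constant in }a_t)$, so the optimal action is $a_t^*$. I would organize the whole thing as a backward induction on the horizon length, with base case a one-slot horizon (only the immediate reward, Proposition~\ref{prop:Rt_highest_pis}) and inductive step the decomposition above, so that $V^*(\textbf{S}_{t-1})$ is well defined wherever invoked. The step I expect to be the main obstacle is the genie-independence claim (c)/the coupling: $k_0$ is a functional of the \emph{scheduled} channel's occupancy path, so $k_0$ — and via (a) the beliefs of all channels at slot $t-1$ — is correlated with that channel's own next-slot age, and which channel plays that role is precisely what changes with $a_t$; one must verify that shifting this correlation across channels leaves $E\, V^*(\textbf{S}_{t-1})$ unchanged. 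Exchangeability of the occupancy processes together with permutation-invariance of $V^*$ in the channel label is what should close it, but stating that invariance carefully is delicate because the per-channel states entering $V^*$ — in particular the fading beliefs — are not themselves symmetric across channels.
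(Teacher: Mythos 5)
Your skeleton matches the paper's: (i) the immediate reward is settled by Proposition~\ref{prop:Rt_highest_pis}, and (ii) the discounted future term is shown to be independent of $a_t$ in the genie-aided system, so the Bellman maximization reduces to the immediate reward. Observations (a)--(c) are exactly the right ingredients for (ii). The problem is the step you yourself flag at the end, and it is a genuine gap, not just a delicacy of bookkeeping: after your coupling (swap the occupancy sample paths of $i$ and $j$, keep all fading paths), the two slot-$(t-1)$ states differ by a transposition of the \emph{age/occupancy} entries of channels $i$ and $j$ while the \emph{belief} entries are unchanged. That is not a relabeling of channels, so permutation invariance of $V^*$ (which requires permuting whole rows of $\textbf{S}_{t-1}$) does not apply; and you cannot repair it by also swapping the fading paths, because channels $i$ and $j$ enter slot $t$ with different beliefs $\pi_{t,1}^s(i)\neq\pi_{t,1}^s(j)$, so their fading paths are not exchangeable. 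Nor are the two $(i,j)$ age marginals exchangeable given $k_0$: the scheduled channel is forced busy at mini-slot $k_0+1$ while the unscheduled one evolves freely. So as written, the comparison $E\,V^*(\textbf{S}_{t-1}^{A})$ versus $E\,V^*(\textbf{S}_{t-1}^{B})$ does not close.

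The paper closes this step differently, and the difference is worth internalizing: instead of comparing $V^*$ at two coupled terminal states, it unrolls the future reward as a sum over future slots $t_0$ of expected per-slot rewards $U_{t_0}$ as in (\ref{eq:Ut_Prk0}), conditions on the last feedback mini-slot $k_0'$ and on the genie's fading observations there (partitioning channels into $\mathcal{C}_1$ and $\mathcal{C}_0$), and checks in each of the three resulting cases that the conditional expected reward is attained by scheduling any idle channel of $\mathcal{C}_1$ (or $\mathcal{C}_0$) and evaluates to a quantity determined by $k_0'$ and the equal-age occupancy statistics alone. The key structural fact this exploits--and which your coupling does not--is that the genie's full observation at $k_0'$ \emph{re-initializes every channel's belief} to $\mathrm{T}^{K-k_0'}(p)$ or $\mathrm{T}^{K-k_0'}(r)$, severing the dependence of the continuation on the slot-$t$ belief vector, so that the only channels through which $a_t$ could influence the future are the law of $k_0'$ and the post-$k_0'$ occupancy evolution, both of which are invariant in $a_t$ under the equal-age hypothesis. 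To complete your write-up you would need to either adopt this per-slot conditioning argument or prove an invariance statement for $E\,V^*$ tailored to an age-only transposition, which is substantially more than ``permutation invariance of $V^*$ in the channel label.''
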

\begin{proof}
    First, from Proposition~\ref{prop:Rt_highest_pis}, the immediate reward is maximized when scheduling the channel with the highest fading state belief at the moment. Now, we focus on showing that the future reward is independent of the action in the current control slot and therefore establishing the proposition. Specifically,
    at the current control slot $t$, $t \geq 2$, consider an arbitrary control slot in the future, $t_0 \in \{t-1,\ldots,1\}$. In the following, we show that the expected immediate reward in this control slot, denoted by $U_{t_0}$, is independent of the current action $a_t$, and thus the future reward is independent of $a_t$.

    Let $k_0'$ denote the latest idle mini-slot in control slot $t_0 + 1$ before the PU returns. The reward $U_{t_0}$ is then calculated as:
    \begin{equation} \label{eq:Ut_Prk0}
        U_{t_0} = \sum_{k_0' = 1}^K U_{t_0}(k_0') \textrm{Pr}(k_0'),
    \end{equation}
    where $\textrm{Pr}(k_0')$ is the distribution of $k_0'$, identical to that of $k_0$ as given in~(\ref{eq:pz_Kgeq3}), and $U_{t_0}(k_0')$ is the expected reward in control slot $t_0$ for a given $k_0'$.

    In the genie-aided system, the spectrum server obtains the feedbacks of the fading states from \textit{all} $N$ channels simultaneously, i.e., at the end of mini-slot $k_0'$. Place the channels on which good channel fading is observed at $k_0'$ in the set $\mathcal{C}_1$, and the rest in another set $\mathcal{C}_0$.
    The characterization of
    the reward $U_{t_0}(k_0')$ can be further divided into the following cases.
    \begin{itemize}
      \item \textit{Case~1: $\mathcal{C}_0 = \emptyset$.}~~This corresponds to the case where $f_{t_0+1,k_0'}(n) = 1, \forall n = 1,\ldots,N$.
        Let $W_{t_0}^{(\textrm{case} 1)}(n)$ be the expected reward in control slot $t_0$ on channel $n$ in this case.
        We have
          \begin{equation} \label{eq:Wt_case1}
            W_{t_0}^{(\textrm{case} 1)}(n) = \pi_{t_0,1}^o(n) \sum_{z=1}^K \sum_{k=1}^z \pi_{t_0,k}^s(n) p_z(x_{t_0}(n)),
          \end{equation}
          where $\pi_{t_0,1}^s(n) = \mathrm{T}^{K-k_0'}(p), \forall n \in\mathcal{C}_1$. It follows that $\pi_{t_0,k}^s(1) = \ldots = \pi_{t_0,k}^s(N), \forall k = 1,\ldots,K$. Further, since $x_t(1) = \ldots =x_t(N)$, we have $p_z(x_{t_0}(1)) = \ldots = p_z(x_{t_0}(N))$, and $\pi_{t_0,1}^o(1) = \ldots = \pi_{t_0,1}^o(N)$, and therefore, scheduling \textit{any} of the idle channels in $\mathcal{C}_1$ achieves the same expected reward in control slot $t_0$, i.e.,
          \begin{equation}
            U_{t_0}(k_0')|_{\textrm{case} 1} = W_{t_0}^{(\textrm{case} 1)}(n), \forall n \in \mathcal{C}_1, o_{t_0,1}(n)=0.
          \end{equation}
          Now, note that given equal idle ages on all $N$ channels, the distribution of $k_0'$ is identical across channels. Therefore, for all $n=1,\ldots,N$, the values of $\pi_{t_0,k}^s(n), k=1,\ldots,K$, $\pi_{t_0,1}^o(n)$ and $p_z(x_{t_0}(n))$ all stay unchanged if a different channel $a_t \neq a_t'$ is scheduled in the current control slot $t$. This implies that $U_{t_0}(k_0')|_{\textrm{case} 1}$ is unchanged, and is thus independent of $a_t$.

      \item \textit{Case~2: $\mathcal{C}_1 = \emptyset$.}~~In this case, $f_{t_0+1,k_0'}(n) = 0, \forall n = 1,\ldots,N$. The expected reward collected in control slot $t_0$ on channel $n$, denoted by $W_{t_0}^{(\textrm{case} 2)}(n)$, can be expressed the same as~(\ref{eq:Wt_case1}), where the channel strength belief $\pi_{t_0,1}^s(n) = \mathrm{T}^{K-k_0'}(r)$, for all $n\in\mathcal{C}_0$. Then, using the similar reasoning as in \textit{Case 1}, we obtain that scheduling \textit{any} of the idle channels in $\mathcal{C}_0$ achieves the same expected reward in control slot $t_0$, i.e.,
          \begin{equation}
            U_{t_0}(k_0')|_{\textrm{case} 2} = W_{t_0}^{(\textrm{case} 2)}(n), \forall n \in \mathcal{C}_0, o_{t_0,1}(n)=0.
          \end{equation}
           Further, the expected reward achieved in this case, $U_{t_0}(k_0')|_{\textrm{case} 2}$, does not change when the action in current control slot $t$ varies, i.e., $U_{t_0}(k_0')|_{\textrm{case} 2}$ is independent of $a_t$.

      \item \textit{Case~3: $\mathcal{C}_0 \neq \emptyset, \mathcal{C}_1 \neq \emptyset$.}~~
         In this case, we first show that the maximum expected reward in control slot $t_0$ is achieved by scheduling {any} of the idle channels in the set $\mathcal{C}_1$, which are perceived to have better channel fading state in the subsequent control slot $t_0$ than the channels in set $\mathcal{C}_0$. Specifically,
         picking any one of the channels from each of the set, $n_1 \in \mathcal{C}_1$ and $n_0 \in \mathcal{C}_0$, we have
        \begin{eqnarray}
            W_{t_0}^{(\textrm{case} 3)}(n_1) \hspace{-2mm} &=&\hspace{-2mm} \sum_{z=1}^K \sum_{k=1}^z \pi_{t_0,k}^s(n_1) p_z(x_{t_0}(n_1)), \nonumber \\
            W_{t_0}^{(\textrm{case} 3)}(n_0) \hspace{-2mm} &=&\hspace{-2mm} \sum_{z=1}^K \sum_{k=1}^z \pi_{t_0,k}^s(n_0) p_z(x_{t_0}(n_0)),
        \end{eqnarray}
        where
        \begin{eqnarray}
            \pi_{t_0,k}^s(n_1) &=& \mathrm{T}^{K-k_0'+k-1}(p),\nonumber \\
            \pi_{t_0,k}^s(n_0) &=& \mathrm{T}^{K-k_0'+k-1}(r).
        \end{eqnarray}
         Based on~(\ref{eq:TL_alpha}) and the property of the positively-correlated Markov chain, the following inequality holds: For all $k = 1,\ldots,K$,
         \begin{equation*}
           \pi_{t_0',k}^s(n_1) = \mathrm{T}^{K-k_0'+k-1}(p) \geq \mathrm{T}^{K-k_0'+k-1}(r) = \pi_{t_0,k}^s(n_0),
         \end{equation*}
         with the equality achieved when $K \rightarrow \infty$.
         Further, applying similar argument as before, we obtain
          \begin{equation*}
            W_{t_0}^{(\textrm{case} 3)}(n_1) >  W_{t_0}^{(\textrm{case} 3)}(n_0).
          \end{equation*}
        Now, since $n_1 \in \mathcal{C}_1$ (likewise, $n_0  \in \mathcal{C}_0$) is arbitrary, from the conclusion drawn in \textit{Case 1}, the maximal expected reward in control slot $t$ under this case is achieved by scheduling the SU to \textit{any} of the idle channels in the set $\mathcal{C}_1$, i.e.,
        \begin{equation}
          U_{t_0}(k_0')|_{\textrm{case} 3} = W_{t_0}^{(\textrm{case} 3)}(n_1), \forall n_1 \in \mathcal{C}_1, o_{t_0,1}(n_1)=0.
        \end{equation}
        Next, based on the similar reasoning as in the previous cases, we know that $U_{t_0}(k_0')|_{\textrm{case} 3}$ is independent of $a_t$.
    \end{itemize}

    Finally, note that $U_{t_0}(k_0')$ can be written as:
    \begin{equation}
        U_{t_0}(k_0') = \sum_{i=1}^3 U_{t_0}(k_0')|_{\textrm{case}~i} \textrm{Pr} (\textrm{Case}~ i),
    \end{equation}
    where
    \begin{eqnarray*} \label{eq:prob_case123}
    \textrm{Pr} (\textrm{Case} 1) &=& \prod_{n \in \mathcal{C}_1} \pi_{t_0+1,k_0'}^s(n), \nonumber \\
     \textrm{Pr} (\textrm{Case} 2) &=& \prod_{n \in \mathcal{C}_0} (1-\pi_{t_0+1,k_0'}^s(n)), \nonumber \\
     \textrm{Pr} (\textrm{Case} 3) &=& \prod_{n \in \mathcal{C}_1} \pi_{t_0+1,k_0'}^s(n) \prod_{n'\in \mathcal{C}_0} (1-\pi_{t_0+1,k_0'}^s(n')),
    \end{eqnarray*}
    are quantities dependent on $k_0'$ only.
    Based on the fact that the idle age at the moment are identical across the channels, the probabilities $\textrm{Pr} (\textrm{Case}~i), i=1,2,3$, and
    the distribution $\textrm{Pr}(k_0')$, are the same across the channels as well. Thus $U_{t_0}$ is independent of $a_t$.

    Since $t_0 \in \{t-1,\ldots,1\}$ is arbitrary, by extension, we have that the total reward collected from control slot $t_0$ till the horizon, i.e., $\sum_{t' = t_0}^1 U_{t'}$, which is the future reward of current control slot $t$, is independent from $a_t$. This concludes the proof and establishes the proposition.
\end{proof}

\textbf{Remarks:} Proposition~\ref{prop:ga_sameage} illustrates the effect of fading state belief on the optimal decisions in the genie-aided system. With ages equalized across the PU channels and the classic ``exploitation vs. exploration'' tradeoff neutralized (by definition of the genie-aided system), we saw that, higher fading belief favors the immediate reward and that the future reward is, in fact, independent of the current decision.

In the following, we study the effect of PU occupancy and age on the optimal decisions in the genie-aided system and, in turn, its impact on the original system.

\subsection{Tradeoffs Associated with PU Occupancy}

The following proposition identifies the effect of channel occupancy state on the optimal scheduling decisions.
\begin{proposition} \label{prop:ga_samestrength}
  When the fading state beliefs are the same across all PU channels,
  it is optimal to schedule the SU to the channel with the lowest idle age at the moment, i.e.,
  \begin{equation}
    a_t^* = \arg \min_n \{x_t(n)\}.
  \end{equation}
\end{proposition}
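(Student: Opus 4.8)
The plan is to mirror the structure of the proof of Proposition~\ref{prop:ga_sameage}: we separately control the immediate reward and the future reward, showing the former is strictly maximized by the lowest-idle-age channel while the latter is independent of the current action. For the immediate reward, Corollary~\ref{corollary:Rt_age} already gives exactly what we need: with all fading state beliefs equal across channels, $R_t(\textbf{S}_t, a_t)$ is maximized by scheduling the channel with the lowest idle age. So the substance of the proof is the future-reward part, which is where the argument differs from the previous proposition.

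For the future reward, I would fix an arbitrary future control slot $t_0 \in \{t-1,\ldots,1\}$ and again write $U_{t_0} = \sum_{k_0'=1}^K U_{t_0}(k_0')\,\textrm{Pr}(k_0')$, conditioning on $k_0'$, the last idle mini-slot in control slot $t_0+1$ before the PU returns on the scheduled channel. The key observations I would invoke are: (i) in the genie-aided system all $N$ channels receive fading feedback up to the same mini-slot $k_0'$, so after that slot every channel's fading belief is refreshed to either $\mathrm{T}^{K-k_0'}(p)$ or $\mathrm{T}^{K-k_0'}(r)$ depending only on the observed bit $f_{t_0+1,k_0'}(n)$, exactly as in the three-case decomposition ($\mathcal{C}_0 = \emptyset$, $\mathcal{C}_1 = \emptyset$, both nonempty) of the previous proof; and (ii) since the fading beliefs are equal across channels entering slot $t$, and the ages evolve deterministically and independently of the scheduling decision, changing the current action $a_t$ does not alter the joint distribution of the observed feedback vector at $k_0'$, nor the ages $x_{t_0}(n)$, nor the occupancy probabilities $\pi_{t_0,1}^o(n)$, nor the $p_z(x_{t_0}(n))$. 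Hence, just as before, each $U_{t_0}(k_0')|_{\textrm{case}~i}$ and each $\textrm{Pr}(\textrm{Case}~i)$ and $\textrm{Pr}(k_0')$ is independent of $a_t$, so $U_{t_0}$ — and by extension $\sum_{t'=t_0}^1 U_{t'}$, the future reward — is independent of $a_t$. Combining this with the strict immediate-reward advantage of the lowest-age channel establishes optimality.

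The one genuinely new wrinkle, and the step I expect to be the main obstacle, is that the idle ages are \emph{not} equal here — the proposition's hypothesis only equalizes fading beliefs. In the proof of Proposition~\ref{prop:ga_sameage} the equality of ages was used repeatedly: it made $p_z(x_{t_0}(n))$ and $\pi_{t_0,1}^o(n)$ channel-independent, and it made the distribution of $k_0'$ identical across channels. With distinct ages, the within-slot reward of a channel genuinely depends on which channel was scheduled, so the clean ``schedule any channel in $\mathcal{C}_1$'' conclusion of Case~3 no longer literally holds. The fix I would pursue is that for the \emph{future}-reward argument what matters is not that the per-channel rewards are equal but that the \emph{set} of available (channel, post-feedback-belief, age) tuples at slot $t_0$ is unchanged by $a_t$: scheduling channel $a_t$ in slot $t$ does not consume, reorder, or otherwise perturb any channel's age trajectory or belief trajectory beyond slot $t_0+1$, because in the genie-aided system all channels are observed regardless of $a_t$. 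So the optimal continuation value from slot $t_0$ onward is a fixed function of a state configuration that does not depend on $a_t$, even though it is no longer a symmetric function. I would phrase the future-reward step at this level — "the reachable state distribution at slot $t_0$, and hence the optimal future reward, is invariant under the choice of $a_t$" — rather than trying to push through the explicit case-by-case reward equalities, which break when ages differ. The immediate reward then tips the balance toward $\arg\min_n x_t(n)$ via Corollary~\ref{corollary:Rt_age}, completing the proof.
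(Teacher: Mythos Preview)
Your proposal has a genuine gap: the future reward in the genie-aided system is \emph{not} independent of $a_t$ when idle ages differ. The crucial mechanism you overlook is that, by definition of the genie-aided system, feedback from \emph{all} channels stops the moment the PU returns on the \emph{scheduled} channel. Hence the random mini-slot $k_0$ at which feedback terminates in the current control slot $t$ has distribution $p_z(x_t(a_t))$ from~(\ref{eq:pz_Kgeq3}), which depends explicitly on the idle age of the channel you chose. Different choices of $a_t$ therefore induce different distributions on $k_0$, and consequently different distributions on the ``freshness'' $K-k_0$ of every channel's fading belief entering slot $t-1$. Concretely, taking $t_0=t-1$ in your decomposition, the factor $\textrm{Pr}(k_0')$ that you claim is invariant is precisely $p_z(x_t(a_t))$ and varies with $a_t$. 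Your proposed fix --- that the reachable state distribution at slot $t_0$ is unchanged by $a_t$ --- fails for exactly this reason: the post-feedback belief tuple at the start of slot $t-1$ is a function of $k_0$, whose law depends on $a_t$.

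The paper's proof takes a different route for the future-reward half. It does \emph{not} argue invariance; instead it shows that the future reward is itself \emph{maximized} by the lowest-age choice. The key idea is an induction hypothesis that $V_{t-1}^*$ is convex in the fading belief, together with the observation that a smaller idle age yields (stochastically) a larger $k_0$, hence beliefs at $t-1$ that are more spread toward $p$ or $r$ rather than collapsed toward the stationary value $\pi_{s|s}$. Jensen's inequality then gives $E_{\pi_{s|s}}V_{t-1}^*(\cdot) > V_{t-1}^*(E_{\pi_{s|s}}[\cdot])$, so fresher feedback --- i.e., lower idle age --- strictly helps the continuation value. Thus both the immediate reward (via Corollary~\ref{corollary:Rt_age}) and the future reward favor $\arg\min_n x_t(n)$. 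Your mirror of Proposition~\ref{prop:ga_sameage} worked there only because equal ages made the $k_0$-distribution channel-independent; that symmetry is exactly what is broken here, and convexity is the tool that replaces it.
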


\begin{proof}
  We prove the proposition by showing that scheduling the channel with the lowest idle age favors:
  \begin{inparaenum}[\itshape 1\upshape)]
    \item the immediate reward $R_t(\textbf{S}_t,a_t)$; and
    \item the optimal future reward $V_{t-1}^*(\textbf{S}_{t-1})$.
\end{inparaenum}
  The first part can be readily shown by appealing to Proposition~\ref{prop:Rt_age} and Corollary~\ref{corollary:Rt_age}. To show the second part, we adopt the following induction hypothesis:

  \textit{Induction Hypothesis: The optimal future reward is convex in the fading state belief.}

  When channel $a_t$ is scheduled in the current control slot $t$, the expected future reward can be evaluated as:
  \begin{equation*}
    V_{t-1}^*(\mathbf{S}_{t-1})|_{a_t}
                =\pi_{t,k_0}^s(a_t) V_{t-1}^*(p) + (1-\pi_{t,k_0}^s(a_t))V_{t-1}^*(r),
  \end{equation*}
  where $V_{t-1}^*(p)$ and $V_{t-1}^*(r)$ represent the future reward calculated when the channel fading state observed in the $k_0$th mini-slot of control slot $t$ is good or bad, respectively. More specifically,
  \begin{eqnarray*}
    V_{t-1}^*(p) \triangleq V_{t-1}^*\left(\pi_{t-1,1}^s(a_t) = \mathrm{T}^{K-k_0}(p)\right), \nonumber \\
    V_{t-1}^*(r) \triangleq V_{t-1}^*\left(\pi_{t-1,1}^s(a_t) = \mathrm{T}^{K-k_0}(r)\right).
  \end{eqnarray*}

  Based on~(\ref{eq:TL_alpha}), we have, for $\gamma\in(0,1)$,
  \begin{equation*}
    \mathrm{T}^L(\gamma) = (p-r)^L\gamma + r\frac{1-(p-r)^L}{1-(p-r)}, L = 0,1,\ldots,
  \end{equation*}
  and $\lim_{L \rightarrow \infty} \mathrm{T}^L(\gamma) = \frac{r}{1-p+r} \triangleq \pi_{s|s}$, where $\pi_{s|s}$ denotes the steady-state probability of perceiving good channel fading on any of the PU channels. This indicates that, a smaller $k_0$, associated with a higher idle age at the moment (recall discussions in Section~\ref{sec:tradeoffs}), results in a larger $K-k_0$ and thus a value of $\pi_{t-1,1}^s(a_t)$ closer to $\pi_{s|s}$, in which case,
  \begin{eqnarray*}
    V_{t-1}^*(\mathbf{S}_{t-1})|_{a_t} \hspace{-3.5mm}&\rightarrow&\hspace{-3.5mm} \pi_{t,k_0}^s(a_t) V_{t-1}^*(\pi_{s|s}) \hspace{-1.2mm}+\hspace{-1.2mm} (1-\pi_{t,k_0}^s(a_t))V_{t-1}^*(\pi_{s|s})
    \triangleq V_{t-1}^*(a_t,E_{\pi_{s|s}}).
  \end{eqnarray*}
  On the contrary, as $k_0$ becomes larger because of a lower idle age, the value of $\pi_{t-1,1}^s(a_t)$ deviates further away from $\pi_{s|s}$, but is closer to $p$ (or $r$). Also, $\pi_{t,k_0}^s(a_t)$ gets closer to $\pi_{s|s}$. Therefore,
  \begin{eqnarray*}
    V_{t-1}^*(\mathbf{S}_{t-1})|_{a_t} \hspace{-3.5mm}&\rightarrow&\hspace{-3.5mm} \pi_{s|s} V_{t-1}^*(p) \hspace{-1.1mm}+\hspace{-1.1mm} (1-\pi_{s|s}) V_{t-1}^*(r) \triangleq E_{\pi_{s|s}} V_t^*(a_t).
  \end{eqnarray*}

  Now, appealing to the induction hypothesis and the Jensen's inequality~\cite{Boyd}, we have that $E_{\pi_{s|s}} V_t^*(a_t) > V_{t-1}^*(a_t,E_{\pi_{s|s}})$, and therefore the future reward is maximized by scheduling the channel with the lowest idle age, i.e.,
  \begin{equation*}
    a_t^* = \min_{a_t}\{x_t(a_t)\}~~s.t.~~ V_{t-1}^*(\textbf{S}_{t-1})|_{a_t^*} = \max_{a_t}\{V_{t-1}^*(\textbf{S}_{t-1})|_{a_t}\}.
  \end{equation*}

  Finally, we proceed to verify the induction hypothesis. At $t=2$, the optimal future reward equals the optimal immediate reward at the horizon $t=1$, i.e.,
  \begin{equation}
    V_1^*(\textbf{S}_1) = R_1(\textbf{S}_1, a_1^*) := \max_{a_1\in\mathcal{A}_1}\{R_1(\textbf{S}_1,a_1)\}.
  \end{equation}
  Since $R_1(\textbf{S}_1,a_1)$ is linear in the strength belief, using the property of convex function~\cite{Boyd}, we have that $V_1^*(\textbf{S}_1)$ is convex in the strength beliefs, which validates the induction hypothesis. Then, using backward induction, we establish the proposition.
\end{proof}

\textbf{Remarks:} Proposition~\ref{prop:ga_samestrength} explicitly illustrates the effect of the PU occupancy and age on the optimal decisions. With fading state beliefs equalized and the classic ``exploitation vs. exploration'' tradeoff neutralized (by definition of the genie-aided system), we saw that:
\begin{inparaenum}[\itshape 1\upshape)]
\item lower idle age on the PU channel favors the immediate reward by allowing more idle time on the channel; and
\item lower idle age also favors the future reward by way of better freshness of the channel fading feedback.
\end{inparaenum}

Thus by studying the full-observation genie-aided system, via the results in Propositions~\ref{prop:ga_sameage} and~\ref{prop:ga_samestrength}, we have decomposed the tradeoffs associated with the channel occupancy and the fading state beliefs in the original system. Indeed, the results in Propositions~\ref{prop:ga_sameage} and~\ref{prop:ga_samestrength} rigorously support the understanding we developed earlier in Section~\ref{sec:tradeoffs} on the tradeoffs in the original system.

\section{Numerical Results \& Further Discussions}\label{sec:numresult}

In this section, we evaluate and compare the optimal rewards of the original system (denoted as $V_{ori}^*$) and the genie-aided system ($V_{genie}^*$). Also, the optimal policy in the original system is compared to a \textit{randomized} scheduling policy (with reward denoted as $V_{random}^*$), where the spectrum server chooses a channel, among all the idle ones, randomly and uniformly, and allocates it to the SU for data transmission. The numerical results are collected for a two channel system with $K=2$, horizon length $m=6$, and discounted factor $\beta = 0.9$. For notational convenience, denote $\Delta_{{ga}-{ori}} = V_{genie}^* - V_{ori}^*$ and $\Delta_{{ori}-{rnd}} = V_{ori}^* - V_{random}^*$.

Table~\ref{table:pr_delta} records the rewards obtained under various baseline cases, for various values of $\delta \triangleq p-r$, which broadly captures the temporal memory in the channel fading. In particular, as $\delta$ decreases, the channel fading memory fades and the difference between the baselines, which are primarily differentiated by the degree to which they exploit the memory in the system, tends to decrease. This is observed from Table~\ref{table:pr_delta}.
\begin{table}[tbh]
\caption{Comparison of rewards when the channel fading memory varies. System parameters used: $u=1,C_I = 1, C_B = 2, x_t(1) = 10, x_t(2) = 5, \pi_{t,1}^s(1) = 0.4, \pi_{t,1}^s(2) = 0.7$}
    \begin{center}
    \label{table:pr_delta}
    \begin{tabular}{ | l | l | l | l | l |}
      \hline
      $\delta_{pr}$ & $ \Delta_{{ga}-{ori}}$ & $\frac{ \Delta_{{ga}-{ori}}}{V_{genie}^*}\%$ & $\Delta_{{ori}-{rnd}}$ & $\frac{\Delta_{{ori}-{rnd}}}{V_{ori}^*}\%$\\
      \hline
        0.8   & 0.0077 & 0.3\% & 0.372  & 14.5\%  \\
      \hline
        0.4   & 0.0068 & 0.29\% & 0.2827 & 9.42\%     \\
      \hline
        0.2   & 0.0023 & $9.8\times 10^{-4}\%$ & 0.1915 & 8.23\%     \\
      \hline
        0.1   & 0.0006 & $2.4\times 10^{-4}\%$ & 0.1836 & 7.93\%     \\
      \hline
    \end{tabular}
    \end{center}
\end{table}

Next, in Table~\ref{table:u}, we compare the baseline rewards under different values of the power exponent $u$, used in the definitions of $P_I$ and $P_B$ in (\ref{eq:PI_PB}).
To build a better understanding of the trend reflected in these numerical results, we consider an arbitrary mini-slot, denoted as $k'$, as the current mini-slot, and the following two exhaustive sets of PU occupancy histories:
\begin{inparaenum}[\itshape 1\upshape)]
\item the set of histories $h_x^I, x=1,2,\ldots$, corresponds to the case when there are exactly $x$ mini-slots between the current mini-slot $k'$ and the most recent mini-slot (preceding $k'$) when the channel was in busy state,
    i.e., the idle age of mini-slot $k'$ is $x$; and
\item the set of histories $h_x^B, x=1,2,\ldots$, corresponds to the case when there are exactly $x$ mini-slots between the current mini-slot $k'$ and the most recent mini-slot (preceding $k'$) when the channel was in idle state,
    i.e., the busy age of mini-slot $k'$ is $x$.
\end{inparaenum}
In Fig.~\ref{fig:occp_history_h}, we plot the two sets of occupancy histories. As has been pointed out in Section~\ref{subsec:probformulation}, the idle/busy age is a sufficient statistic for capturing the memory in the PU channels' occupancy states. Thereby, with the construction of $h_x^I$ and $h_x^B$, we can examine the effect of the temporal correlation of PU occupancy on the system performance. Specifically, the conditional idle probability in mini-slot $k'$, given occupancy history, can be obtained as:
\begin{eqnarray}
  \pi_{k'}^o |_{h_x^I} \hspace{-2mm}&=&\hspace{-2mm} P_I(x) = \frac{1}{x^u+C_I},\nonumber \\
  \pi_{k'}^o |_{h_x^B} \hspace{-2mm}&=&\hspace{-2mm} 1 - P_B(x) = 1 - \frac{1}{x^u+C_B}.
\end{eqnarray}
As an example, we plot $\pi_{k'}^o |_{h_x^I}$ in Fig.~\ref{fig:occpmemory_idle}. It is clear that as $u$ increases, the conditional probability curves become steeper. Define the threshold point, $x_0$, such that for all $x>x_0$, the difference in $\pi_{k'}^o |_{h_x^I}$ is insignificant ({below $10^{-2}$}).
Now, note from the figure that the threshold $x_0$ decreases with increasing $u$, i.e., $x_0^{(u=5)} < x_0^{(u=3)} < x_0^{(u=1)}$. This indicates that the impact of different occupancy histories on the current PU occupancy state diminishes with increasing $u$ and thus a decreased memory in the PU occupancy. Similar argument holds when considering $h_x^B$. In short, the exponent $u$ broadly captures the PU occupancy memory, and as its value increases, the memory decreases and thus the rewards under various cases, as expected, come closer with increasing $u$. This is illustrated in Table~\ref{table:u}.

\begin{figure}
  \centering
  \subfigure[]{\label{fig:occp_history_h}\includegraphics [width = 0.38\textwidth] {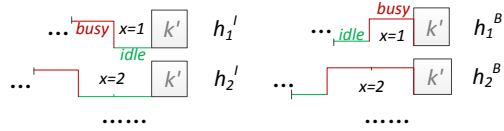}}\\
  \subfigure[]{\label{fig:occpmemory_idle}\includegraphics [width = 0.41\textwidth] {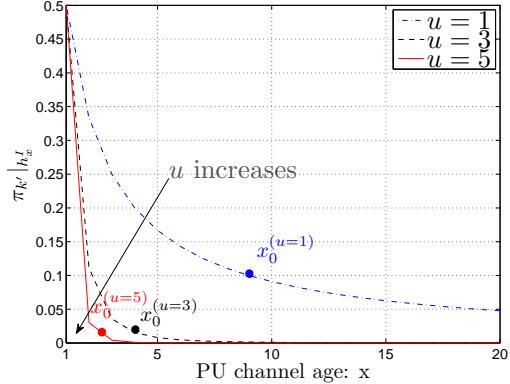}}
  \caption{Occupancy histories \& the conditional idle probability.}

\end{figure}

\begin{table}[tbh]
\caption{Comparison of rewards when the occupancy memory varies. System parameters used: $p = 0.9, r = 0.1, C_I = 1, C_B = 2, x_t(1) = 0, x_t(2) = 1, \pi_{t,1}^s(1) = 0.4, \pi_{t,1}^s(2) = 0.7$}
    \begin{center}
    \label{table:u}
    \begin{tabular}{ | l | l | l | l | l |}
      \hline
      $u$ & $ \Delta_{{ga}-{ori}}$ & $\frac{ \Delta_{{ga}-{ori}}}{V_{genie}^*}\%$ & $\Delta_{{ori}-{rnd}}$ & $\frac{\Delta_{{ori}-{rnd}}}{V_{ori}^*}\%$\\
      \hline
        1   & 0.0088 & 0.35\% & 0.3273 &  12.66\%   \\
      \hline
        3   & 0.006 & 0.26\% & 0.2201 & 9.59\%     \\
      \hline
        5   & 0.0043 & 0.2\% & 0.1839 & 8.28\%     \\
      \hline
    \end{tabular}
    \end{center}
\end{table}

Finally, as can be seen from both tables, the original system performs very close to the genie-aided system, while the cost of measuring and sending the channel fading feedback is only $\frac{1}{N}$ of the latter.
Also, the optimal policy significantly outperforms the randomized policy, indicating that the temporal correlation structure in the channel fading and PU occupancy can greatly benefit the opportunistic spectrum scheduling, when appropriately exploited.

\section{Conclusions} \label{sec:conclusion}
In this work, we studied opportunistic spectrum access for a single SU in a CR network with multiple PU channels. We formulated the problem as a partially observable Markov decision process, and examined the intricate tradeoffs in the optimal scheduling process, when incorporating the temporal correlation in \textit{both} the channel fading and PU occupancy states. We modeled the channel fading variation with a two-state first-order Markov chain. The temporal correlation of PU traffic was modeled using age and a class of monotonically decreasing functions, which can have a long memory. The optimality of the simple greedy policy was established under certain conditions. For the general case, we individually studied the tradeoffs in the immediate reward, and the total reward. Further, by developing a genie-aided system with full observation of the channel fading feedbacks, we decomposed and characterized the multiple tiers of the intricate tradeoffs in the original system. Finally, we numerically studied the performance of the two systems and showed that the original system achieved an optimal total reward very close (within $1\%$) to that of the genie-aided system. Further, the optimal policy in the original system significantly outperformed randomized scheduling, highlighting the merit of exploiting the temporal correlation in the system states. We believe that our formulation and the insights we have obtained open up new horizons in better understanding spectrum allocation in cognitive radio networks, with problem settings that go beyond the traditional ones.

\end{document}